\newcommand{\sump}{\sqcup}
\newcommand{\cc}[1]{{\mbox{\textnormal{\textsf{#1}}}}\xspace} 
\newcommand{\NP}{\cc{NP}}
\newcommand{\XP}{\cc{XP}}
\newcommand{\ang}[1]{\langle #1 \rangle}
\newcommand{\sep}{\;|\;}
\newcommand{\seq}{\subseteq}
\newcommand{\ca}[1]{\mathcal{#1}}
\newcommand{\bigoh}{\mathcal{O}}
\newcommand{\tww}{\textnormal{tww}}
\pgfplotsset{compat=1.15}
\Crefname{splemma}{Lemma}{Lemmas}
\Crefname{sptheorem}{Theorem}{Theorems}
\Crefname{spdefinition}{Definition}{Definitions}
\Crefname{spproperty}{Property}{Properties}
\Crefname{spcorollary}{Corollary}{Corollaries}
\author{Jakub Balab\' an}{Faculty of Informatics, Masaryk University, Brno, Czech Republic}{485053@mail.muni.cz}{0000-0002-2475-8938}{}
\author{Robert Ganian}{Algorithms and Complexity Group, TU Wien, Vienna, Austria}{rganian@gmail.com}{0000-0002-7762-8045}{Robert Ganian acknowledges support by the FWF and WWTF Science Funds (FWF project 10.55776/Y1329 and WWTF project ICT22-029).}
\author{Mathis Rocton}{Algorithms and Complexity Group, TU Wien, Vienna, Austria}{mrocton@ac.tuwien.ac.at}{0000-0002-7158-9022}{Mathis Rocton acknowledges support by the \includegraphics[width=0.5cm]{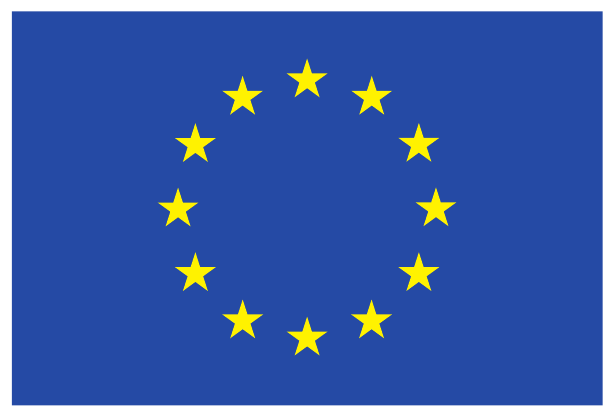} European Union's Horizon 2020 research and innovation COFUND programme (LogiCS@TUWien, grant agreement No 101034440), and the FWF Science Fund (FWF project 10.55776/Y1329).}
\title{Twin-Width Meets Feedback Edges\\ and Vertex Integrity}
\titlerunning{Twin-Width Meets Feedback Edges and Vertex Integrity}
\keywords{twin-width, fixed-parameter algorithms, feedback edge number, vertex integrity}
\begin{document}
\maketitle

\begin{abstract}
The approximate computation of twin-width has attracted significant attention already since the moment the parameter was introduced. A recently proposed approach (STACS 2024) towards obtaining a better understanding of this question is to consider the approximability of twin-width via fixed-parameter algorithms whose running time depends not on twin-width itself, but rather on parameters which impose stronger restrictions on the input graph. The first step that article made in this direction is to establish the fixed-parameter approximability of twin-width (with an additive error of 1) when the runtime parameter is the feedback edge number.

Here, we make several new steps in this research direction and obtain:

\begin{itemize}
\item An asymptotically tight bound between twin-width and the feedback edge number;

\item A significantly improved fixed-parameter approximation algorithm for twin-width under the same runtime parameter (i.e., the feedback edge number) which circumvents many of the technicalities of the original result and simultaneously avoids its formerly non-elementary runtime dependency;

\item An entirely new fixed-parameter approximation algorithm for twin-width when the runtime parameter is the vertex integrity of the graph.
\end{itemize}
\end{abstract}

\section{Introduction}
\emph{Twin-width} is a comparatively recent graph-theoretic measure which is the culmination of as well as a catalyst for several recent breakthroughs in the area of algorithmic model theory~\cite{DBLP:conf/soda/BonnetGKTW21,DBLP:conf/icalp/BonnetG0TW21,DBLP:conf/stoc/BonnetGMSTT22,DBLP:conf/stacs/BonnetGMT23,DBLP:conf/soda/BonnetKRT22}.
Indeed, it has the potential to provide a unified explanation of why model-checking first order logic is fixed-parameter tractable on a number of graph classes which were, up to then, considered to be separate islands of tractability for the model-checking problem. This includes graphs of bounded rank-width, proper minor-closed graphs, map graphs~\cite{BonnetKTW22}, bounded-width posets~\cite{DBLP:conf/iwpec/BalabanH21} as well as a number of other specialized graph classes~\cite{DBLP:conf/wg/BalabanHJ22,EppsteinConfluent}.

While twin-width is related to graph parameters such as rank-width and path-width~\cite{DBLP:conf/soda/BonnetKRT22} 
as well as to measures which occur in matrix theory such as excluding linear minors~\cite{DBLP:conf/stacs/BonnetGMT23}, what distinguishes twin-width from these other measures is that we lack efficient algorithms for computing the twin-width of a graph. In particular, it is known that already deciding whether a graph has twin-width at most $4$ is \NP-hard~\cite{BergeBD22}. This is highly problematic for the following reason: virtually every known algorithm that uses twin-width requires access to a so-called \emph{contraction sequence}, which serves the same role as the decompositions typically used for classical parameters such as treewidth~\cite{RobertsonS83} and rank-width~\cite{Oum05}. Intuitively speaking, a contraction sequence of width $t$---which serves as a witness for $G$ having twin-width at most $t$
---of a graph $G$ is a sequence $C$ of contractions of (not necessarily pairwise adjacent) vertex pairs which satisfies the following property: at each step of $C$, every vertex $v$ only has at most $t$ neighbors with an ancestor that is not adjacent to some ancestor of $v$\footnote{Formal definitions are provided in Section~\ref{sec:prelims}.}. 

The aforementioned \NP-hardness of identifying graphs of twin-width $4$~\cite{BergeBD22} effectively rules out fixed-parameter as well as \XP\ algorithms for computing optimal contraction sequences when parameterized by the twin-width itself. One possible approach to circumvent this obstacle would be to devise a fixed-parameter algorithm which still uses the twin-width $t$ as the parameter and computes at least an approximately-optimal contraction sequences, i.e., a contraction sequence of width $f(t)$ for some computable function $f$. On a complexity-theoretic level, such a result may be seen as ``almost'' as good as computing twin-width exactly, as it would still yield a fixed-parameter algorithm for first-order model checking.

Unfortunately, the task of finding such an algorithm has proven to be highly elusive, and it is far from clear that one even exists---in fact, whether twin-width can be approximated in fixed-parameter time (for any function $f$ of the twin-width) can be seen as arguably the most prominent open question in contemporary research of twin-width. A recent approach towards attacking this question proposed by Balab\' an, Ganian and Rocton~\cite{BalabanGR24} is to first relax the running time requirement and ask whether we can obtain an $f(t)$-approximation for twin-width at least via a fixed-parameter algorithm where the runtime parameter is different (and, in particular, larger) than the twin-width $t$ itself. As a first step in this direction, Balab\' an, Ganian and Rocton developed a highly non-trivial fixed-parameter algorithm that computes a contraction sequence of width at most $t+1$ and is parameterized by the \emph{feedback edge number} of the input graph, i.e., the edge deletion distance to acyclicity. In the same paper, they also showed that the twin-width of a graph with feedback edge number $k$ is upper-bounded by $k+1$.

\subparagraph*{Contributions.}
In this article, we significantly expand on the results of Balab\' an, Ganian and Rocton~\cite{BalabanGR24} and present the next steps in the overarching program of understanding the boundaries of tractability for computing approximately-optimal contraction sequences. We summarize our three main contributions below.

In Section~\ref{sec:bound}, we revisit the relationship between twin-width and the feedback edge number. Here, we improve the previous linear bound of Balab\' an, Ganian and Rocton~\cite{BalabanGR24} to square-root,
and also show that our new bound is asymptotically tight. More precisely, we show that every graph class with feedback edge number $k$ has twin-width $\bigoh(\sqrt{k})$ (Theorem~\ref{thm:upper}), and also construct a graph class with feedback edge number $k$ whose twin-width is lower-bounded by $\Theta(\sqrt{k})$ (Proposition~\ref{prop:lower}).

In Section~\ref{sec:improvedfen}, we revisit the main result of Balab\' an, Ganian and Rocton~\cite{BalabanGR24}: a polynomial-time reduction procedure which transforms every input graph $G$ with feedback edge number $k$ and twin-width $t$ into a (tri-)graph $G'$ whose twin-width lies between $t$ and $t+1$ and whose size is upper-bounded by a non-elementary function of $k$. While this suffices to obtain the desired fixed-parameter approximation algorithm (as one may brute-force over all contraction sequences of $G'$), the dependence on the parameter $k$ is astronomical and the proof relies on a sequence of highly technical arguments about how a hypothetical contraction sequence may be retrofitted in order to avoid certain degenerate steps. As our second main contribution, we provide a new proof for the fixed-parameter approximability of twin-width parameterized by the feedback edge number which not only avoids many of the technical difficulties faced in the previous approach, but crucially also improves the size bound for the reduced instance $G'$ from a \emph{non-elementary} to a \emph{quadratic} function of $k$.

Finally, in Section~\ref{sec:vi} we push the frontiers of approximability for twin-width by obtaining an algorithm which computes a contraction sequence for $G$ of width at most twice the graph's twin-width and runs in time 
$f(p)\cdot|G|$, where $p$ is the \emph{vertex integrity} of $G$. Vertex integrity is a parameter which intuitively measures how easily a graph may be separated into small parts, and is defined as the smallest integer $p$ such that there exists a separator $X$ with the following property: each connected component $C$ of $G-X$ satisfies $|C\cup X|\leq p$.
Vertex integrity may be seen as the natural intermediate step between the \emph{vertex cover number} (which is the size of the smallest vertex cover in $G$, and which is known to allow for a trivial fixed-parameter algorithm for computing twin-width) and decompositional parameters such as \emph{treedepth} and \emph{treewidth} (for which the existence of a fixed-parameter approximation algorithm for twin-width remains a prominent open question~\cite{BalabanGR24}).
Our result relies on a data reduction procedure which incorporates entirely different arguments than those used for the feedback edge number, and the correctness proof essentially shows that every optimal contraction sequence can be transformed into a near-optimal one where all ``similar parts'' of $G$ are treated in a ``similar way''.

\subparagraph*{Related Work.}
Beyond the setting of computing twin-width and the associated contraction sequences, there are numerous other works which have targeted fixed-parameter algorithms for computing a structural graph parameter $X$ when parameterized by graph parameters that differ from $X$. The general aim in this research direction is typically to further one's understanding of the fundamental problem of computing the targeted parameter $X$. Examples of fixed-parameter algorithms obtained in this setting include those for treewidth parameterized by the feedback vertex number~\cite{DBLP:journals/siamdm/BodlaenderJK13},
treedepth parameterized by the vertex cover number~\cite{DBLP:conf/iwpec/KobayashiT16}, 
MIM-width parameterized by the feedback edge number and other parameters~\cite{DBLP:conf/innovations/EibenGHJK22}, 
and the directed feedback vertex number parameterized by the (undirected) feedback vertex number~\cite{DBLP:journals/algorithmica/BergougnouxEGOR21}. The feedback edge number and vertex integrity have also been used to obtain parameterized algorithms for a number of other challenging problems~\cite{UhlmannW13,BannisterCE18,GanianO21,LampisM21,GanianK21,GimaHKKO22,FichteGHSO23,GimaO24}, whereas the latter parameter has also been studied in the literature under other asymptotically-equivalent names such as the \emph{fracture number}~\cite{DvorakEGKO21,GanianKO21} and \emph{starwidth}~\cite{Ee17}. We refer interested readers to the very recent manuscript of Hanaka, Lampis, Vasilakis and Yoshiwatari~\cite{VIarxiv} for a more detailed overview of vertex integrity and its relationship to other fundamental graph measures.

\section{Preliminaries}
\label{sec:prelims}

For integers $i$ and $j$, we let $[i,j] := \{n \in \mathbb N \sep i \le n \le j\}$ and $[i] := [1, i]$. 
We assume familiarity with basic concepts in graph theory~\cite{Diestel} and parameterized algorithmics~\cite{DowneyF13,CyganFKLMPPS15}.
When $H$ is an induced subgraph of $G$, we denote it by $H \seq G$.
Given vertex sets $X$ and $U$, we will use $G[X]$ to denote the graph induced on $X$ and $G - U$ to denote the graph $G[V(G)\setminus U]$; similarly, for an edge set $F$, $G-F$ denotes $G$ after removing the edges in $F$. 

A \emph{dangling path} in $G$ is a path of vertices which all have degree $2$ in $G$, and a \emph{dangling tree} in $G$ is an induced subtree in $G$ which can be separated from the rest of $G$ by removing a single edge.
The \emph{length} of a path is the number of edges it contains.
The \emph{distance} between two vertices $u$ and $v$ 
is the length of the shortest path between them.

An edge set $F$ in an $n$-vertex graph $G$ is called a \emph{feedback edge set} if $G-F$ is acyclic, and the \emph{feedback edge number} of $G$ is the size of a minimum feedback edge set in $G$.
 We remark that a minimum feedback edge set can be computed in time $\bigoh(n)$ as an immediate corollary of the classical (DFS- and BFS-based) algorithms for computing a spanning tree in an unweighted graph $G$.

A graph has \emph{vertex integrity} $p$ if $p$ is the smallest integer with the following property: $G$ contains a vertex set $S$ such that for each connected component $H$ of $G-S$, $|V(H)\cup S|\leq p$. One may observe that 
the vertex integrity is upper-bounded by the size of a minimum vertex cover in the graph (i.e., the vertex cover number) plus one, 
and both vertex integrity and the feedback edge number are lower-bounded by treewidth minus one~\cite{RobertsonS83}.
The vertex integrity of an $n$-vertex graph can be computed in time $\bigoh(p^{p+1}\cdot n)$~\cite{DrangeDH16}.

\smallskip
\noindent \textbf{Twin-Width.}\quad
In order to provide a seamless transition between our new results and the previous algorithm of Balab\' an, Ganian and Rocton~\cite{BalabanGR24} (which our Section~\ref{sec:improvedfen} improves and builds on), we will closely follow the terminology introduced in their preceding work.

A \emph{trigraph} $G$ is a graph whose edge set is partitioned into a set of \emph{black} and \emph{red} edges. The set of red edges is denoted $R(G)$, and the set of black edges $E(G)$.
The \emph{black (\emph{resp}.\ red) degree} of $u\in V(G)$ is the number of black (resp.\ red) edges incident to $u$ in $G$.
We extend graph-theoretic terminology to trigraphs by ignoring the colors of edges; for example, the degree of $u$ in $G$ is the sum of its black and red degrees (in the literature, this is sometimes called the \emph{total degree}).
We say a (sub)graph is \emph{black (\emph{resp}.\ red)} if all of its edges are black (resp. red); for example, $P$ is a red path in $G$ if it is a path containing only red edges. Without a color adjective, the path (or a different kind of subgraph) may contain edges of both colors. We use $G[Q]$ to denote the subgraph of $G$ induced on $Q\subseteq V(G)$.

Given a trigraph $G$, a \emph{contraction} of two distinct vertices $u,v\in V(G)$ is the operation which produces a new trigraph by (1) removing $u, v$ and adding a new vertex $w$, (2) adding a black edge $wx$ for each $x\in V(G)$ such that $xu$, $xv\in E(G)$, and (3) adding a red edge $wy$ for each $y\in V(G)$ such that $yu\in R(G)$, or $yv\in R(G)$, or $y$ contains only a single black edge to either $v$ or $u$.
A sequence $C = (G = G_1,\ldots,G_n)$ is a \emph{partial contraction sequence of $G$} if it is a sequence of trigraphs such that for all $i\in [n-1]$, $G_{i+1}$ is obtained from $G_i$ by contracting two vertices.
A \emph{contraction sequence} is a partial contraction sequence which ends with a single-vertex graph.
The \emph{width} of a (partial) contraction sequence $C$, denoted $w(C)$, is the maximum red degree over all vertices in all trigraphs in $C$.
The \emph{twin-width} of $G$, denoted $\tww(G)$, is the minimum width of any contraction sequence of $G$, and a contraction sequence of width $\tww(G)$ is called \emph{optimal}. An example of a contraction sequence is provided in Figure~\ref{fig:seq}.

\begin{figure}[ht]
\begin{tikzpicture}[line cap=round,line join=round,>=triangle 45,x=1.0cm,y=1.0cm]

\tikzset{
    vertex/.style = {draw, circle, fill=gray, minimum width=4pt, inner sep=0pt}}
    
\begin{scriptsize}
\node[vertex] (a) {};
\node[vertex] (b) [right of = a] {};
\node[vertex] (c) [above of = a] {};
\node[vertex] (d) [right of = c] {};
\node[vertex] (e) [above of = c] {};
\node[vertex] (f) [right of = e] {};
\draw (a)--(b)--(c)--(f)--(e)--(d)--(b);
\draw(a)--(c)--(e);

\node () at (a) [left=2pt] {$A$};
\node () at (c) [left=2pt] {$C$};
\node () at (e) [left=2pt] {$E$};
\node () at (b) [right=2pt] {$B$};
\node () at (d) [right=2pt] {$D$};
\node () at (f) [right=2pt] {$F$};

\node[vertex] (a2) [right = 40pt of b]{};
\node[vertex] (b2) [right of = a2] {};
\node[vertex] (c2) [above of = a2] {};
\node[vertex] (d2) [right of = c2] {};
\node[vertex] (ef) [above of = c2] {};
\draw (a2)--(b2)--(c2)--(ef)--(d2)--(b2);
\draw(a2)--(c2);
\draw[color=red, thick] (d2)--(ef);

\node () at (a2) [left=2pt] {$A$};
\node () at (c2) [left=2pt] {$C$};
\node () at (b2) [right=2pt] {$B$};
\node () at (d2) [right=2pt] {$D$};
\node () at (ef) [right=2pt] {$EF$};

\node[vertex] (ab) [right = 40pt of b2]{};
\node[vertex] (c3) [above of = ab] {};
\node[vertex] (d3) [right of = c3] {};
\node[vertex] (ef3) [above of = c3] {};
\draw (ab)--(c3)--(ef3);
\draw[color=red, thick] (ab)--(d3)--(ef3);

\node () at (ab) [right=2pt] {$AB$};
\node () at (c3) [left=2pt] {$C$};
\node () at (d3) [left=3pt] {$D$};
\node () at (ef3) [right=2pt] {$EF$};

\node[vertex] (cd) [right = 40pt of d3] {};
\node[vertex] (ab4) [below of= cd]{};
\node[vertex] (ef4) [above of = cd] {};
\draw[color=red, thick] (ab4)--(cd)--(ef4);
\node () at (ab4) [left=2pt] {$AB$};
\node () at (cd) [left=2pt] {$CD$};
\node () at (ef4) [left=2pt] {$EF$};

\node[vertex] (cdef) [right = 40pt of cd] {};
\node[vertex] (ab5) [below of= cdef]{};
\draw[color=red, thick] (ab5)--(cdef);
\node () at (ab5) [left=2pt] {$AB$};
\node () at (cdef) [above = 2pt] {$CDEF$};

\node[vertex] (fin) [right = 40pt of ab5] {};
\node () at (fin) [above = 2pt] {$ABCDEF$};

\end{scriptsize}
\end{tikzpicture}
\caption{A contraction sequence of width 2 for the leftmost graph, consisting of $6$ trigraphs.
\label{fig:seq}}
\end{figure}
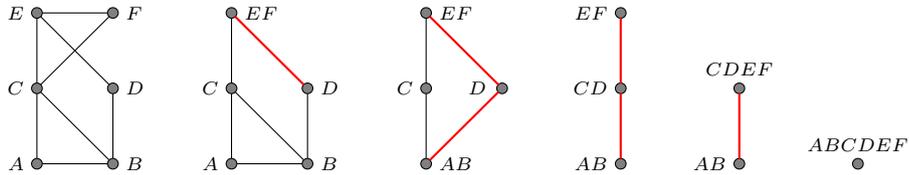

Let us now fix a contraction sequence $C = (G = G_1,\ldots,G_n)$.
For each $i \in [n]$, we associate each vertex $u \in V(G_i)$ with a set $\beta(u, i) \seq V(G)$, called the \emph{bag} of $u$, which contains all vertices contracted into $u$.
Formally, we define the bags as follows:

\begin{itemize}
\item for each $u \in V(G)$, $\beta(u, 1) := \{u\}$;
\item for $i \in [n-1]$, if $w$ is the new vertex in $G_{i+1}$ obtained by contracting $u$ and $v$, then $\beta(w, i+1) := \beta(u, i) \cup \beta(v,i)$; otherwise, $\beta(w, i+1) := \beta(w, i)$. 
\end{itemize}

Note that if a vertex $u$ appears in multiple trigraphs in $C$, then its bag is the same in all of them, and so we may denote the bag of $u$ simply by $\beta(u)$.
Let us fix $i, j \in [n]$, $i \le j$.
If $u \in V(G_i)$, $v \in V(G_j)$, and $\beta(u) \seq \beta(v)$, then we say that $u$ is \emph{an ancestor} of $v$ in $G_i$ and $v$ is \emph{the descendant} of $u$ in $G_j$ (clearly, this descendant is unique).
If $H$ is an induced subtrigraph of $G_i$, then $u \in V(G_j)$ is a \emph{descendant} of $H$ if it is a descendant of at least one vertex of $H$. 
A contraction of $u, v \in V(G_j)$ into $uv \in V(G_{j+1})$ \emph{involves} $w \in V(G_i)$ if $w$ is an ancestor of $uv$.

The following definition provides terminology that allows us to partition a contraction sequence into ``steps'' based on contractions between a subset of vertices in the original graph.

\begin{definition}\label{def:restriction}
Let $C$ be a contraction sequence of a trigraph $G$, and let $H$ be an induced subtrigraph of $G$ with $|V(H)| = m$.
For $i \in [m-1]$, let $C\ang i_H$ be the trigraph in $C$ obtained by the $i$-th contraction between two descendants of $H$, and let $C\ang 0_H = G$. For $i\in [m-1]$, let $U_i$ and $W_i$ be the bags of the vertices which are contracted into the new vertex of $C\ang i_H$.

A contraction sequence $C[H] = (H = H_1, \ldots, H_m)$ is the \emph{restriction of $C$ to $H$} if for each $i \in [m-1]$, $H_{i+1}$ is obtained from $H_{i}$ by contracting the two vertices $u, w \in V(H_{i})$ such that $\beta(u) = U_i \cap V(H)$ and $\beta(w) = W_i \cap V(H)$.
\end{definition}

It will also be useful to have an operation that forms the ``reverse'' of a restriction; we define this below.

\begin{definition}\label{def:ext}
Let $G$ and $H$ be graphs such that $H \seq G$ and let $C_0$ be a partial contraction sequence of $H$. We say that a partial contraction sequence $C$ of $G$ is the \emph{extension} of $C_0$ to $G$ 
if $C[H] = C_0$ and no contraction in $C$ involves a vertex of $G - H$. 
When $G_i$ is the $i$-th trigraph in $C_0$, we denote by $G_i \uparrow G$ the $i$-th trigraph in $C$ (this makes sense since the lengths of $C_0$ and $C$ are the same).
\end{definition}

Finally, we introduce a notion that will be useful when dealing with reduction rules in the context of computing contraction sequences.

\begin{definition}\label{def:effective}
Let $G, G'$ be trigraphs. We say that the twin-width of $G'$ is \emph{effectively} at most the twin-width of $G$, denoted $\tww(G') \le_e \tww(G)$, if \textup{(1)} $\tww(G') \le \tww(G)$ and \textup{(2)} given a contraction sequence $C$ of $G$, a contraction sequence $C'$ of $G'$ of width at most $w(C)$ can be constructed in polynomial time.
If $\tww(G') \le_e \tww(G)$ and $\tww(G) \le_e \tww(G')$, then we say that the two graphs have \emph{effectively} the same twin-width, $\tww(G') =_e \tww(G)$.
\end{definition}

\smallskip
\noindent \textbf{Preliminary Observations and Remarks.}\quad
We begin by stating a simple brute-force algorithm for computing twin-width.

\begin{observation}\label{obs:brute-force}
An optimal contraction sequence of an $n$-vertex graph can be computed in time $2^{\bigoh(n\cdot \log n)}$.
\end{observation}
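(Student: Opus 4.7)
The plan is a direct brute-force enumeration over the space of all contraction sequences. First I would observe that any contraction sequence of an $n$-vertex graph consists of exactly $n-1$ contractions, since each contraction reduces the number of vertices by one. At the $i$-th step, the current trigraph has $n - i + 1$ vertices, so the number of possible unordered pairs available for contraction is $\binom{n-i+1}{2} \le n^2$. Therefore the total number of contraction sequences is bounded by
\[
\prod_{i=1}^{n-1} \binom{n-i+1}{2} \;\le\; (n^2)^{n-1} \;=\; 2^{2(n-1)\log n} \;=\; 2^{\bigoh(n \log n)}.
\]

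Next, I would give the algorithm: enumerate every such sequence, and for each one, simulate it on the trigraph, maintaining the red and black adjacencies after each contraction. Each individual contraction can be carried out in polynomial time (updating the neighborhood of the new vertex by following the rules in the definition of contraction and recoloring edges as required), and the red degree of every vertex can be tracked throughout. Thus, for each enumerated sequence, the maximum red degree attained over the whole sequence, i.e.\ its width, can be computed in polynomial time. The algorithm returns a sequence realizing the minimum such width.

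Finally, I would bound the total runtime as the product of the enumeration bound and the per-sequence cost, giving $2^{\bigoh(n \log n)} \cdot \mathrm{poly}(n) = 2^{\bigoh(n \log n)}$, and argue correctness by noting that an optimal contraction sequence is, by definition, among those enumerated. There is no real obstacle here: the only subtle point is making sure the enumeration is indexed correctly (choosing an unordered pair at each step) so that the counting bound goes through cleanly, and that the simulation of a contraction correctly handles the creation of red edges as specified in the definition.
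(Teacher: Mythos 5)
Your proof is correct and follows essentially the same brute-force enumeration strategy as the paper, bounding the number of contraction sequences by $(n^2)^{n-1} = 2^{\bigoh(n\log n)}$ and noting that each can be evaluated in polynomial time. The only difference is that you spell out the per-step pair count and the simulation in slightly more detail than the paper does.
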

\begin{proof}
Each contraction sequence is defined by $n-1$ choices of a pair of vertices, and so the number of contraction sequences is $\bigoh((n^2)^n) = \bigoh(2^{2n\cdot \log n}) \le 2^{\bigoh(n\cdot \log n)}$. Moreover, computing the width of a contraction sequence can clearly be done in polynomial time.
\end{proof}

The following observation provides a useful insight into the optimal contraction sequences of trees.
\begin{observation}[{\cite[Section~3]{BonnetKTW22}}]\label{obs:contract-trees}
For any rooted tree $T$ with root $r$, there is a contraction sequence $C$ of $T$ of width at most $2$ such that the only contraction involving $r$ is the very last contraction in $C$.
\end{observation}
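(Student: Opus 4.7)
The plan is to argue by induction on $|V(T)|$, with the trivial base case $|V(T)| = 1$. For the inductive step, let $c_1, \ldots, c_k$ denote the children of $r$ and let $T_i$ be the subtree rooted at $c_i$. By the induction hypothesis, each $T_i$ admits a contraction sequence $C_i$ of width at most $2$ in which $c_i$ is involved only in the final step. I will process the subtrees sequentially while maintaining the invariant that after the $i$-th phase, the descendants of $c_1, \ldots, c_i$ have been merged into a single vertex $v_{\le i}$ whose only incident edge is a red edge to $r$.

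Phase $i$ consists of applying $C_i$ to the descendants of $c_i$ and, for $i \ge 2$, then contracting $v_{\le i-1}$ with the resulting vertex $v_i$ into $v_{\le i}$. The key observation is that inside $T_i$ the only vertex with an edge to the rest of the graph is $c_i$, black-adjacent to $r$, while $v_{\le i-1}$ has no neighbors in $T_i$ at all. Consequently, as long as $c_i$ is not contracted, no new red edge between $T_i$ and the outside can appear, and the width bound of $C_i$ extends directly to the larger trigraph. At the very last step of $C_i$, contracting $c_i$ with its remaining partner $v_i'$ forces a red edge $rv_i$ (since $r$ was black-adjacent to $c_i$ but not to $v_i'$), momentarily bringing $r$'s red degree to $2$. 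The subsequent contraction of $v_{\le i-1}$ with $v_i$---safe because both have a single red edge, to $r$---yields $v_{\le i}$ with a single red edge to $r$ and restores the invariant. After all $k$ phases the graph is reduced to $r$ and $v_{\le k}$, and one final contraction (the only one involving $r$) produces the trivial single-vertex trigraph.

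The main thing I would double-check is that no unintended red edges emerge between $T_i$ and external vertices while $C_i$ runs; this hinges on $v_{\le i-1}$ being isolated from $T_i$ and on $c_i$ remaining untouched until $C_i$'s final step, at which point the resulting red edge $rv_i$ is accounted for explicitly in the width analysis above. Everything else is routine bookkeeping of red degrees across the inductive step.
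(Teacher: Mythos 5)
The paper does not prove this observation; it cites it directly from~\cite[Section~3]{BonnetKTW22}, so there is no in-paper argument to compare against. Your inductive argument is the standard one and is essentially correct: the key facts---that $v_{\le i-1}$ has no neighbor inside $T_i$, that $r$'s sole adjacency into $T_i$ is the black edge to $c_i$, and that $c_i$ is touched only at the final step of $C_i$---indeed guarantee that no vertex of $T_i$ acquires a red edge to the outside before the last contraction in phase $i$, so the width of the extension stays at $\max(w(C_i),2)\le 2$.

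One small bookkeeping slip: your invariant asserts that $v_{\le i}$'s only incident edge is a \emph{red} edge to $r$, but this is false when $T_1,\dots,T_i$ are all single-vertex subtrees---then $v_{\le i}$ arises from contracting leaves each black-adjacent to $r$, and the resulting edge to $r$ is black, not red (and similarly the ``forced red edge $rv_i$'' need not appear when $T_i$ is a single leaf). The invariant should instead read that $v_{\le i}$'s only incident edge is an edge (of either color) to $r$. This does not break anything: what the width analysis actually uses is that $v_{\le i}$ and $v_i$ each have exactly one incident edge, namely to $r$, so the subsequent contraction yields a vertex of red degree at most one and $r$'s red degree never exceeds two. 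With that phrasing fixed, the proof is complete.
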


\section{The Square-Root Bound}
\label{sec:bound}

In this section, we prove that a graph with feedback edge number $k$ has twin-width at most $\ca O(\sqrt{k})$.
On a high level, the idea we will employ here builds on the preprocessing techniques originally introduced in the context of computing twin-width on tree-like graphs~\cite{BalabanGR24}: first we will contract the dangling trees, then the dangling paths, and for the final step we will use the following theorem of Ahn, Hendrey, Kim and Oum:

\begin{theorem}[\cite{general-bounds}]\label{thm:bound-edges}
If $G$ is a graph with $m$ edges, than the twin-width of $G$ is at most $\sqrt{3m} + o(\sqrt{m})$.
\end{theorem}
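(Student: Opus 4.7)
The plan is to exploit sparsity in two complementary ways: locally through degeneracy, and globally through a balancing of parameters. The starting observation is that a graph $G$ with $m$ edges has degeneracy at most $\sqrt{2m}$: if every vertex had degree strictly greater than $\sqrt{2m}$, then $2m=\sum_v \deg(v)>n\sqrt{2m}$ would force $n<\sqrt{2m}$, in which case the trivial bound $\tww(G)\le n-1<\sqrt{2m}\le\sqrt{3m}$ already suffices. Moreover, the total edge count is monotonically non-increasing under contractions (for any vertex $x$ outside the contracted pair, the at most two edges from $x$ to $\{u,v\}$ are replaced by a single edge to the new vertex), so the degeneracy bound $\sqrt{2m}$ persists throughout every contraction sequence we build.

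Using this, I would construct a contraction sequence greedily as follows. At each step, pick a vertex $v$ of degree at most $\sqrt{2m}$ in the current trigraph, and contract it with a partner $u$ chosen to minimize $|N(v)\triangle N(u)|$. A double counting argument summing symmetric differences over all candidate partners $u$ shows that the minimum is controlled by $\bigoh(m/n)$, and combining this with the light vertex degree bound yields a red degree of $\bigoh(\sqrt m)$ on the newly contracted vertex. In parallel, the red degree of any \emph{other} vertex $x$ increases by at most one per contraction in which its two old black edges to the contracted pair were ``split'' (exactly one neighbor of $x$ was involved); charging each such increase to an edge of $G$ that is subsequently lost gives a uniform upper bound on red degree across the sequence.

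The central technical challenge, and the main obstacle, is to obtain the sharp constant $\sqrt{3}$ rather than a cruder $\bigoh(\sqrt m)$. A direct peeling argument of the above form tends to yield a constant closer to $\sqrt 6$ or $2\sqrt 2$, because it separately upper-bounds the two sources of red degree (from the light vertex being peeled, and from the ``core'' of partially processed heavy vertices) instead of balancing them. To recover $\sqrt 3$, one chooses a degree threshold $d$ so that the contribution of light vertices of degree $\le d$ and the contribution of the remaining heavy part (which has at most $2m/d$ members) coincide; setting these equal yields $d=\sqrt{3m}$ as the balanced value, which is precisely where the constant in the theorem originates. The $o(\sqrt m)$ correction term absorbs rounding effects and the error from approximating $\lfloor d\rfloor$ by $d$ throughout the amortization.

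Formalising this balanced trade-off is where the bulk of the work lies: one must verify that the worst-case red degree over the entire sequence is attained exactly at the balanced threshold, and that the accumulated red edges inherited by the ``core'' vertex do not exceed the budget even in the worst ordering of peeling steps. A probabilistic argument over uniformly random vertex orderings, combined with concentration, is a natural way to extract a deterministic contraction sequence achieving the bound.
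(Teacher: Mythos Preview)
The paper does not prove this statement at all: Theorem~\ref{thm:bound-edges} is quoted from Ahn, Hendrey, Kim and Oum~\cite{general-bounds} and used as a black box in the proof of Theorem~\ref{thm:upper}. There is therefore nothing in the paper to compare your attempt against.

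As for the attempt itself, what you have written is a plausible high-level narrative but not a proof. Several steps are asserted without the mechanism being pinned down. The ``double counting argument summing symmetric differences over all candidate partners'' is not stated, and it is not clear over which set of partners you sum or why the minimum symmetric difference is $\bigoh(m/n)$; for instance, if the light vertex $v$ lies in a sparse part of the graph while most edges are concentrated elsewhere, $|N(v)\triangle N(u)|$ can be $\Theta(\deg(u))$ for every candidate $u$, and $m/n$ need not control this. The charging argument for the accumulated red degree of bystander vertices (``charging each such increase to an edge of $G$ that is subsequently lost'') is also only gestured at: a single contraction can turn many black edges red simultaneously, and an edge that becomes red is not lost, so the charge needs to be routed more carefully. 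Finally, the closing paragraph openly acknowledges that the sharp constant $\sqrt{3}$ has not actually been derived---you identify the balanced threshold heuristically but defer the verification to an unspecified ``probabilistic argument over uniformly random vertex orderings, combined with concentration''. That is the heart of the theorem, and it is missing.

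If you want to supply a proof, the cleanest route is the one in~\cite{general-bounds}: fix a threshold $d$, repeatedly contract a vertex of minimum degree with a carefully chosen neighbor until only vertices of degree above $d$ remain (at most $2m/d$ of them), and then finish arbitrarily; optimising $d$ gives the $\sqrt{3m}$ constant. Your sketch is in the right spirit but would need the actual bounds filled in before it stands on its own.
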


An issue we need to resolve before applying the aforementioned high-level approach is that Theorem~\ref{thm:bound-edges} only applies to graphs without red edges, whereas the trigraph $G$ we will obtain after dealing with the dangling trees and paths may contain these. The following lemma shows, using the properties of the red edges in $G$, that making all edges of $G$ black can only decrease the twin-width by a constant.

\begin{lemma}\label{lem:handling-red-edges}
Let $G$ be a trigraph with maximum red degree 2 such that each red edge in $G$ is incident to a vertex of degree at most 2.
If $G'$ is the graph obtained from $G$ by making all edges black, then $\tww(G) \le \tww(G') + 4$.
\end{lemma}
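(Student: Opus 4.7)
The plan is to take an optimal contraction sequence $C' = (G'_1, \ldots, G'_n)$ of $G'$ of width $w := \tww(G')$ and use it as a blueprint to build a contraction sequence $C$ of $G$ of width at most $w+4$. The main technical challenge is that the original red edges of $G$ (which are absent in $G'$) can propagate through contractions and inflate the red degree of vertices in the trigraphs of $C$ relative to what $C'$ witnesses for $G'$. Concretely, for bags $\beta(x),\beta(y)$ at a common step, the edge $xy$ can be black in $G'_i$ but red in $G_i$ precisely when all vertex pairs between the two bags are connected in $G$ but at least one of these connecting edges is originally red in $G$; so the ``extra'' red degree at $x$ is bounded by the number of distinct bags reachable from $\beta(x)$ via original red edges.

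The first step I would take is to analyze the structure of the red subgraph $R := R(G)$. Since the maximum red degree of $G$ is $2$, $R$ has maximum degree at most $2$ and is thus a disjoint union of paths and cycles. Combining this with the hypothesis that every edge of $R$ has an endpoint of total degree at most $2$ in $G$ (a \emph{low} vertex), I would show that every \emph{high} vertex (degree at least $3$ in $G$) lying on a red path or cycle must have both of its red neighbors low; hence high vertices are isolated within each red component.

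I would then design $C$ in two phases. Phase~1 (preprocessing) would walk along each red component of $R$ and contract its vertices together, absorbing the original red edges into the bag structure. The low-degree constraint is crucial here: each low vertex of a red component contributes at most one edge to the outside of the component, so as the component is merged, the accumulated boundary remains bounded by a small constant; a careful accounting would show that the red degree can be kept at most $4$ throughout Phase~1. Phase~2 would then apply a suitable adaptation of $C'$ to the reduced trigraph; the additive $+4$ in the final bound absorbs the red residue left behind by Phase~1, since after preprocessing the original red edges have been merged away and the remaining trigraph mirrors $G'$ up to identifying vertices within each red component.

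The main obstacle I anticipate is the careful bookkeeping in Phase~1: when a red component is absorbed, both the resulting super-vertex and the outside vertices adjacent to the component may see their red degree grow. Pinning down the exact constant $4$, rather than a larger constant, relies on the isolation of high vertices within each red component together with the max-red-degree-$2$ hypothesis, which together limit the new red edges to at most two from the ``endpoints'' of each red component plus at most two propagating through any single outside vertex. The verification that these local bounds combine to a global bound of $w+4$ will require a uniform invariant tracking the red degree of every vertex across the two phases.
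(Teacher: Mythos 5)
Your opening observation is exactly right and matches the paper's starting point: the extra red edges of $G_i$ relative to $G'_i$ arise precisely from original red edges of $G$ bridging two bags that are fully joined in $G'$. But from there you diverge into a two-phase ``preprocess, then run $C'$'' scheme, and this is where the proposal breaks. Phase~1 is already shaky: a high vertex $u$ sitting on a red component can have many black neighbors outside the component, and contracting a super-vertex from the component into $u$ (or into one of $u$'s neighbors) can make all of those black edges red, so ``red degree at most $4$ throughout Phase~1'' does not follow from the isolation of high vertices. Phase~2 is the more serious gap: after Phase~1 you hold a trigraph $\tilde G$ that is a \emph{quotient} of $G'$ (vertices of red components identified), and $C'$ is a contraction sequence of $G'$, not of $\tilde G$. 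There is no general principle that a quotient inherits low twin-width, and ``a suitable adaptation of $C'$'' is precisely the claim that needs a proof; restricting or reindexing $C'$ to the merged vertices does not obviously keep the width near $w$. You would end up needing to re-prove, at each step of the adapted sequence, a bound on how the merged red components perturb the red degree --- which is essentially the original problem.

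The paper avoids all of this by \emph{not} preprocessing: it runs $C'$ verbatim on $G$ and bounds, at every step and every vertex $u$, the number of red edges incident to $u$ in $G_i$ that are not red in $G'_i$. Suppose there were five such edges $uv_1,\dots,uv_5$. Each comes from an original red edge $u_jv_j^0$ with $u_j\in\beta(u)$, $v_j^0\in\beta(v_j)$, and because $uv_j$ is not red in $G'_i$, the bags $\beta(u)$ and $\beta(v_j)$ are fully joined in $G'$; hence $u_jv_\ell^0\in E(G)$ for all $j,\ell$. The red-degree-$\le 2$ hypothesis forces three of the $u_j$'s to be distinct, giving six vertices $u_a,u_b,u_c,v_a^0,v_b^0,v_c^0$ all of degree $\ge 3$ in $G$, contradicting the hypothesis that every red edge has an endpoint of degree $\le 2$. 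That is the ingredient your proposal is missing: a direct, local counting argument at a single step, using both hypotheses together, rather than a global restructuring of the sequence. If you want to salvage your write-up, drop both phases and finish the first paragraph with this counting argument; the bound $+4$ then falls out immediately.
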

\begin{proof}
Let $C'$ be an optimal contraction sequence of $G'$, and let $C$ be the contraction sequence of $G$ obtained by following $C'$. We will prove that $w(C) \le w(C') + 4$.

Let $G_i$ be any trigraph in $C$ and let $G'_i$ be the trigraph in $C'$ such that $V(G_i) = V(G'_i)$.
Suppose for a contradiction that there are distinct vertices $u, v_1, v_2, v_3, v_4, v_5 \in V(G_i)$ such that for each $j \in [5]$, $uv_j$ is a red edge in $G_i$ but not in $G'_i$. Recall that $\beta(w)$ denotes the set of vertices contracted to $w$ (the bag of $w$), and observe that for each $j \in [5]$, there must be vertices $u_j, v_j^0 \in V(G)$ such that $u_j \in \beta(u)$ and $v_j^0 \in \beta(v_j)$, and $u_jv_j^0$ is an edge that is black in $G'$ but red in $G$.
Since $uv_j \notin R(G'_i)$, there must be either all edges or no edges between $\beta(u)$ and $\beta(v_j)$ in $G'$.
However, $u_jv_j^0 \in E(G')$, which means that for all $j, \ell \in [5]$, $u_jv_{\ell}^0$ is a black edge in $G'$ (and so it is an edge also in $G$).

Since all vertices of $G$ have red degree at most 2 and $u_jv_j^0 \in R(G)$ for each $j \in [5]$, there must be $a,b,c \in [5]$ such that $|\{u_a, u_b, u_c\}| = 3$. Now observe that each vertex in $\{u_a, u_b, u_c, v_a^0, v_b^0, v_c^0\}$ has degree at least 3 in $G$ (since $u_jv_{\ell}^0$ is an edge in $G$ for all $j, \ell \in \{a,b,c\}$). However, each red edge in $G$ has an endpoint of degree at most 2, which is a contradiction.

We have proven that the red degree of each vertex $u \in V(G_i) = V(G'_i)$ may be higher in $G_i$ than in $G'_i$ by at most 4, which proves that $w(C) \le w(C') + 4$.
\end{proof}

We are now ready to prove the square-root upper bound on twin-width.

\begin{theorem}\label{thm:upper}
There exists a function $f(k)\in \bigoh(\sqrt{k})$ such that every graph $G$ with feedback edge number $k$ has twin-width at most $f(k)$.
\end{theorem}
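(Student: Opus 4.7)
The plan is to execute the three-step blueprint sketched just above the theorem: first contract dangling trees, then dangling paths, then invoke Theorem~\ref{thm:bound-edges} via Lemma~\ref{lem:handling-red-edges}. Concretely, I would begin by processing every maximal dangling tree $T$: rooting $T$ at the endpoint of its single cut-edge, Observation~\ref{obs:contract-trees} produces a width-$2$ sequence that collapses $T$ to a single vertex whose only edge to the rest of $G$ is a (red) edge to the former attachment vertex; any two tree representatives that end up attached to the same core vertex are red twins and can be merged at width~$2$. Afterwards, I would contract each maximal dangling path of degree-$2$ vertices into a single degree-$2$ representative with two red edges to the path's endpoints (the internal red degree stays at most $2$), and I would merge ``parallel'' path representatives sharing both endpoints. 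Call the resulting trigraph $G^*$.

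The next step is a counting argument. After preprocessing, every non-representative vertex of $G^*$ has degree at least $3$ in its topological core, since otherwise it would have been swallowed by a dangling path or tree. Plugging this into $2m \ge 3n$ together with the feedback-edge inequality $m - n + c \le k$ gives at most $2k$ core vertices and at most $3k$ core-to-core edges. Since there are $O(k)$ dangling representatives, each contributing at most two edges to $G^*$, we have $|E(G^*)| = O(k)$. Moreover, every red edge of $G^*$ has one endpoint that is a dangling representative, hence of degree at most $2$, satisfying the ``degree'' side of the hypothesis of Lemma~\ref{lem:handling-red-edges}.

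To finish, let $G^{**}$ be $G^*$ with all edges recolored black. Under the remaining hypothesis that $G^*$ has maximum red degree at most $2$, Lemma~\ref{lem:handling-red-edges} gives $\tww(G^*) \le \tww(G^{**}) + 4$, and Theorem~\ref{thm:bound-edges} applied to $G^{**}$ (which has $O(k)$ edges) yields $\tww(G^{**}) = O(\sqrt{k})$. Prepending the width-$2$ preprocessing gives $\tww(G) = O(\sqrt{k})$, as required. The main obstacle I expect is precisely the maximum-red-degree-$2$ hypothesis: a single core vertex of $G^*$ can a priori accumulate one red edge per distinct attached dangling structure, which may exceed two. Resolving this will demand either a more aggressive merging scheme (collapsing tree representatives sharing an attachment, and path representatives sharing both endpoints, as above) together with a scheduling argument that keeps only $O(1)$ red edges alive at any core vertex at any time, or a mild generalization of Lemma~\ref{lem:handling-red-edges} whose additive error grows with the max red degree but is still dominated by the $O(\sqrt{k})$ coming from Theorem~\ref{thm:bound-edges}. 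This is the technical crux I would need to handle carefully.
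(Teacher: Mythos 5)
Your high-level plan matches the paper's proof exactly (contract dangling trees, process dangling paths, count edges, recolor and invoke Theorem~\ref{thm:bound-edges} via Lemma~\ref{lem:handling-red-edges}), and you have correctly identified the one genuine difficulty. However, the gap you flag is real and your two proposed repairs do not close it, so the proposal as written does not constitute a proof.

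The problem: once you collapse each maximal dangling path to a \emph{single} representative with two red edges to the surrounding high-degree vertices, a single high-degree ``core'' vertex incident to many such paths (or to a spike and a path, etc.) immediately accumulates red degree exceeding $2$, killing the hypothesis of Lemma~\ref{lem:handling-red-edges}. Your first suggested fix---merging representatives sharing \emph{both} endpoints---only handles parallel paths, not the typical case of a vertex that is an endpoint of many paths going to distinct destinations; those cannot be merged. Your second suggestion (a generalization of Lemma~\ref{lem:handling-red-edges} with error depending on the max red degree) is also not enough as stated, because without a bound on how many paths are incident to a core vertex, the max red degree is unbounded in terms of $k$.

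The paper's resolution is more delicate and is worth internalizing: it \emph{never} collapses a dangling path to a single vertex. Instead, each long path is shortened to exactly three vertices, so that the two outer vertices $u_1$ and $u_n$ survive and retain their original \emph{black} edges to the surrounding set $Q$ of high-degree/feedback vertices; the only red edges created by path-shortening stay strictly \emph{inside} the path. Similarly, spikes (contracted dangling trees) are merged two-at-a-time whenever they share an attachment vertex, which guarantees every vertex has at most one attached spike, and when a path endpoint $u_1$ carries a spike, the spike is contracted into $u_2$ rather than $u_1$ so that $u_1$'s black edge into $Q$ is not reddened. With these choices the invariant ``every vertex has red degree at most $2$, and every red edge has an endpoint of degree at most $2$'' is maintained throughout, after which the edge-count and the application of Lemma~\ref{lem:handling-red-edges} plus Theorem~\ref{thm:bound-edges} go through as you describe. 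The lesson is that the right move at the path-processing step is to \emph{shorten} rather than \emph{absorb}: this keeps the red edges away from the core.
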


\begin{proof}
Let $F$ be a smallest feedback edge set of $G$ and assume $k := |F| > 0$.
We will prove the statement by constructing a contraction sequence for $G$ of width at most $f(k)\in \bigoh(\sqrt{k})$.
We begin by contracting each maximal dangling tree to a single vertex using Observation~\ref{obs:contract-trees}. After a maximal dangling tree has been contracted, we call the last remaining vertex a \emph{spike}. Whenever a vertex is adjacent to two spikes, we contract them together (the obtained vertex is still called a spike). Observe that throughout this process, no vertex has red degree higher than 2: this is ensured by Observation~\ref{obs:contract-trees} and the fact that a red neighbor of a vertex not in a dangling tree must be a spike.

Let $G^\alpha$ be the obtained trigraph and let $T$ be the tree obtained from $G^\alpha$ by removing all spikes and edges in $F$. Let $Q = \{u \in V(T) \sep u$ is incident to an edge of $F$ in $G^\alpha$ or $u$ has degree higher than $2$ in $T\}$.
Observe that $T - Q$ is a graph consisting of disjoint paths. Let $\ca P$ be the set of these paths. For each path $P = (u_1, \ldots, u_n)$ in $\ca P$, we do the following.

\begin{itemize}
\item If $n > 2$, then for each $i \in [2, n-1]$ such that $u_i$ has a spike $v$, contract $u_i$ and $v$ (do this in increasing order). If $u_1$ (resp. $u_n$) has a spike $v$, contract $v$ and $u_2$ (resp. $u_{n-1}$).
\item If $n > 3$, shorten $P$ to a path with exactly three vertices by repeatedly contracting neighboring vertices of $P - \{u_1, u_n\}$. 
\end{itemize}

\begin{figure}
\includegraphics[scale=1.4]{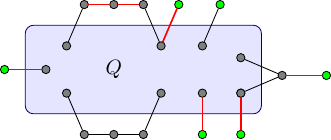}
\caption{A trigraph after processing the dangling trees and shortening the paths in $\ca P$. Spikes are colored in green. The edges between vertices of $Q$ are not depicted. Notice that one of the paths is black: this means it had no spikes and it has not been shortened. Also notice that one spike is attached by a black edge: it was a maximal dangling tree with only one vertex in $G$.}
\label{fig:square-root}
\end{figure}

Observe that throughout this process, no vertex has red degree higher than 2: a vertex in $Q$ has red degree at most 1 (its red neighbor must be a spike) and a vertex in a path $P$ either has a spike and at most one red neighbor in $P$ or at most two red neighbors in $P$. See Figure~\ref{fig:square-root} for an illustration.

Now we will count the number of edges in the obtained trigraph $G^\beta$.
Observe that $|Q|, |\ca P| \le 4k$.
First, observe that there are at most $5k$ edges in $G^\beta[Q]$: $k$ edges belonging to $F$ and at most $4k$ other edges since $G^\beta[Q] - F$ is a forest. In addition, each vertex of $Q$ may have a spike in $G^\beta$, which constitutes up to $4k$ other edges.
Second, let $P \in \ca P$. If the length of $P$ in $G$ is at least 2, then $P$ corresponds to at most four edges in $G^\beta$: at most two edges of the path itself and two edges connecting $P$ to the rest of the graph (i.e., to vertices of $Q$).
However, if $P$ is shorter in $G$, then its vertices may have spikes in $G^\beta$ and it may correspond to up to 5 edges: one edge of the path, two edges connecting it to $Q$, and two edges going to the spikes.
Hence, $\ca P$ adds at most $20k$ edges, and thus there are at most $29k$ edges in $G^\beta$.

Let $G^\gamma$ be the graph obtained from $G^\beta$ by changing the color of all edges to black.
By Theorem~\ref{thm:bound-edges}, $G^\gamma$ has twin-width at most $\sqrt{87k} + o(\sqrt{k})$.
Notice that $G^\beta$ satisfies the preconditions of Lemma~\ref{lem:handling-red-edges}, which means that $\tww(G^\beta) \le \tww(G^\gamma) + 4$.
Hence, the twin-width of $G^\beta$ is also at most $\sqrt{87k} + o(\sqrt{k})$, and the same also holds for the original graph $G$ (since the partial contraction sequence from $G$ to $G^\beta$ has width at most 2).
\end{proof}

We conclude the section by showing that Theorem~\ref{thm:upper} is asymptotically tight.

\begin{proposition}
\label{prop:lower}
There exists a function $f(k)\in \Omega(\sqrt{k})$ and an infinite class $\mathcal{G}$ of graphs such that for each $G\in \mathcal{G}$ with feedback edge number $k$, $\tww(G)\geq f(k)$.
\end{proposition}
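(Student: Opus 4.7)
The plan is to realise $\mathcal{G}$ as a family $\{G_n : n \ge n_0\}$ in which $G_n$ is an $n$-vertex graph whose twin-width is linear in $n$; the bound on $\sqrt{k}$ then drops out of a one-line edge count. Indeed, any graph on $n$ vertices has at most $\binom{n}{2}$ edges, so its feedback edge number $k$ satisfies $k \le \binom{n}{2}$ and hence $n \ge \sqrt{2k}$. Combined with $\tww(G_n) \ge c\cdot n$ for some constant $c > 0$, this yields $\tww(G_n) \ge c\sqrt{2k}$, so taking $f(k) := c\sqrt{2k}$ gives an $\Omega(\sqrt{k})$ bound that asymptotically matches Theorem~\ref{thm:upper}. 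Since $G_n \sim G(n,1/2)$ almost surely has $\Theta(n^2)$ edges, the feedback edge number of $G_n$ tends to infinity with $n$, so the class $\mathcal{G}$ is genuinely infinite and the inequality is witnessed on arbitrarily large values of $k$.

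It therefore suffices to exhibit, for every sufficiently large $n$, an $n$-vertex graph of twin-width $\Omega(n)$. The cleanest way I would do this is by invoking the probabilistic lower bound established by Bonnet, Kim, Thomass\'e and Watrigant in Twin-width~I: a graph drawn from the Erd\H{o}s--R\'enyi distribution $G(n,1/2)$ has twin-width at least $n/2 - o(n)$ with high probability, so in particular such a graph exists for every large $n$, and any one of them may be chosen as $G_n$. A self-contained alternative is a counting/entropy argument: a contraction sequence of width at most $d$ on $n$ labelled vertices can be encoded in $2^{O(nd\log n)}$ bits by recording, at each of the $n-1$ steps, the contracted pair together with the bounded red neighbourhood of the new vertex; comparing this with the $2^{\binom{n}{2}}$ distinct $n$-vertex labelled graphs shows by pigeonhole that some $n$-vertex graph must have twin-width $\Omega(n / \log n)$, and a sharper encoding removes the logarithmic factor.

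The main (essentially only) obstacle is the $\Omega(n)$ twin-width lower bound for a single $n$-vertex graph; every other step in the argument is either an elementary inequality or a standard concentration bound on $|E(G(n,1/2))|$. Since the random-graph lower bound is already available in the literature, I would cite it rather than reprove it, and so the proof of Proposition~\ref{prop:lower} reduces to fixing the family $\{G_n\}$ and performing the short edge-count calculation above.
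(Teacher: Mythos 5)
Your proposal is correct and follows the same logical skeleton as the paper's proof: exhibit an $n$-vertex family with twin-width $\Theta(n)$, then use the trivial $k \le \binom{n}{2}$ edge count to transfer this into an $\Omega(\sqrt{k})$ lower bound. The only real difference is the witness family: the paper uses Paley graphs, citing the result of Ahn, Hendrey, Kim and Oum that their twin-width is exactly $(n-1)/2$, whereas you propose Erd\H{o}s--R\'enyi graphs $G(n,1/2)$ and cite the probabilistic $n/2 - o(n)$ lower bound from Twin-width~I. Both choices work; the Paley-graph route is marginally cleaner since it is explicit, deterministic, and the twin-width is known exactly rather than only with high probability. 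Your alternative counting/entropy sketch is not needed and as stated would require more care to push past $\Omega(n/\log n)$, but since you only offer it as a fallback and explicitly cite the literature for the main path, the proof stands.
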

\begin{proof}
Let $n$ be a prime power such that $n \equiv 1 \pmod 4$.
It is known that there exists an $n$-vertex ($(n-1)/2$)-regular graph $G$ (a so-called Paley graph) that has twin-width exactly $(n-1)/2$~\cite[Section 3]{general-bounds}.
Since $|E(G)| = (n^2-n)/4$ and the spanning forest of $G$ has at most $n-1$ edges, we know that the feedback edge number $k$ of $G$ is at least $(n^2-5n+4)/4 \in \Omega(n^2)$.

Let $\mathcal{G}$ be the class of all such $n$-vertex Paley graphs. For each $n$-vertex graph $G$ in $\mathcal{G}$, we have $k \in \Omega(n^2)$ and $\tww(G) \in \Theta(n)$. Let $f$ be the function which maps each $k$ to the minimum of $\{\tww(G)~|~G\in \mathcal{G}\text{ is a graph with feedback edge number }k\}$. Thus, for each $G\in \mathcal{G}$, $\tww(G)\geq f(k)$, and the aforementioned relationships between the number $n$ of vertices of that graph, $\tww(G)$ and $k$ guarantee that $f(k)\in \Omega(\sqrt{k})$, as desired. \end{proof}

\section{A Better Algorithm Parameterized by the Feedback Edge Number}
\label{sec:improvedfen}

We begin by recalling that the case of twin-width $2$ is known to already admit an exact nearly single-exponential fixed-parameter algorithm parameterized by the feedback edge number (see Theorem~\ref{thm:tww2} below), and thus here we focus our efforts on graphs with higher twin-width.

\begin{theorem}[\cite{BalabanGR24}]\label{thm:tww2}
If $G$ is a graph with feedback edge number $k$ and $\tww(G) \le 2$, then an optimal contraction sequence of $G$ can be computed in time $2^{\bigoh(k\cdot \log k)}+n^{\bigoh(1)}$.
\end{theorem}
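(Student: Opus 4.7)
The plan is to design polynomial-time reduction rules that shrink $G$ to an equivalent trigraph $G^\ast$ of size $\bigoh(k)$, then invoke Observation~\ref{obs:brute-force} on $G^\ast$ to enumerate all contraction sequences in time $2^{\bigoh(k \log k)}$. The resulting optimal sequence on $G^\ast$ is extended back to the original graph in polynomial time via Definition~\ref{def:ext}, giving the additive $n^{\bigoh(1)}$ term in the runtime.

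First I would observe that a graph of feedback edge number $k$ admits a natural decomposition: letting $F$ be a minimum feedback edge set, the set $Q$ consisting of all vertices of degree at least $3$ in $G$ together with the endpoints of $F$ has size $\bigoh(k)$, and the remainder of $G$ decomposes into maximal dangling trees rooted at vertices of $Q$ and maximal dangling paths whose endpoints lie in $Q$. I would then formulate two reduction rules: \textup{(R1)} replace each maximal dangling tree by a single pendant vertex attached to its root in $Q$; and \textup{(R2)} shorten each maximal dangling path of length greater than a suitably chosen small constant down to exactly that length. Using Observation~\ref{obs:contract-trees} together with the hypothesis $\tww(G) \le 2$, one then argues that any optimal contraction sequence of $G$ can be rearranged so that each dangling tree and the interior of each long dangling path is processed ``locally'', using only contractions that never create red edges incident to $Q$, which establishes $\tww(G^\ast) =_e \tww(G)$.

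After exhausting \textup{(R1)} and \textup{(R2)}, the trigraph $G^\ast$ has $\bigoh(k)$ vertices: the $\bigoh(k)$ vertices of $Q$, one pendant per dangling tree, and a constant number of interior vertices per dangling path. Observation~\ref{obs:brute-force} then produces an optimal contraction sequence $C^\ast$ of $G^\ast$ in time $2^{\bigoh(k \log k)}$. To construct the final sequence for $G$, we prepend to (the extension of) $C^\ast$ the local width-$2$ sub-sequences that contract each dangling tree via Observation~\ref{obs:contract-trees} and that telescope the interior of each long dangling path; since each of these local pieces uses red degree at most $2$ and introduces no red edges outside of its own region, the overall width of the resulting sequence is exactly $w(C^\ast) = \tww(G)$.

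The main obstacle, and the part I expect to be genuinely delicate, is establishing the safety of \textup{(R2)} with respect to \emph{optimality} rather than merely preserving the upper bound of $2$. Concretely, one has to argue that whenever $\tww(G) \le 2$, any optimal contraction sequence of $G$ can be transformed into one in which the contractions internal to a long dangling path form a contiguous block whose behavior is independent of the path's length once that length exceeds the chosen constant. This ``sequence-surgery'' argument---rearranging a hypothetical optimal sequence so that long dangling paths are contracted uniformly---is the technical crux of the proof, and is in the spirit of (though considerably lighter than) the retrofitting arguments invoked by the general fixed-parameter approximation algorithm of \cite{BalabanGR24}.
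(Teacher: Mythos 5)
This theorem is cited directly from~\cite{BalabanGR24}; the paper contains no proof of it, so there is no internal argument to compare against. Your outline---collapse dangling trees, truncate long dangling paths, brute-force an $\bigoh(k)$-size kernel via Observation~\ref{obs:brute-force}, then extend the resulting sequence back to $G$---is in the right spirit and broadly aligned with the preprocessing machinery underlying Theorem~\ref{thm:preprocess}, so the skeleton is reasonable.

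Two gaps remain. The smaller one is the kernel-size claim: the set of vertices of degree at least $3$ in $G$ together with the endpoints of $F$ is \emph{not} of size $\bigoh(k)$ in general (a cycle with a pendant attached to every vertex has feedback edge number $1$ but $\Omega(n)$ degree-$3$ vertices). The $\bigoh(k)$ bound only holds for the core graph obtained \emph{after} iteratively stripping dangling trees, and the number of pendants produced by (R1) likewise requires its own argument (two pendants sharing a root are twins only after the trees have been contracted consistently), so the order of operations in your sketch needs to be repaired before the size bound goes through. The larger gap is the one you flag yourself: showing that truncating long dangling paths preserves twin-width \emph{exactly} under the hypothesis $\tww(G)\le 2$---not merely up to an additive $1$, as in Lemma~\ref{lem:fen-impr-final-lem}---is precisely where the content of the cited theorem lives, and the proposal offers no argument for it. Your concluding claim that the final width is ``exactly $w(C^\ast)=\tww(G)$'' also quietly assumes $\tww(G)=2$: Observation~\ref{obs:contract-trees} only guarantees width at most $2$ for the dangling-tree sub-sequences, so the cases $\tww(G)\in\{0,1\}$ would need separate handling. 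As written, this is a plausible plan rather than a proof.
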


Our algorithm uses the same initial preprocessing steps as the previous result of Balab\'an, Ganian and Rocton~\cite{BalabanGR24}. These are formalized through the following definition and theorem; note that in the approach we use here, we can use a slightly more general (and less technical) definition of \emph{tidy} $(H,\mathcal{P})$-\emph{graphs} than the preceding paper.

\begin{definition}\label{def:tidyHP}
A connected trigraph $G$ with $\tww(G) \ge 2$ is a tidy $(H,\ca P)$-graph if $\ca P$ is a non-empty set of dangling red paths in $G$, and there are two disjoint induced subtrigraphs of $G$, namely $H$ and $\sump\ca P$ (the disjoint union of all paths in $\ca P$), such that each vertex of $G$ belongs to one of them.
Moreover, if $u \in V(H)$ has a neighbor $v \in V(\sump\ca P)$ in $G$, then $u$ has black degree 0 in $G$, and $v$ is the only neighbor of $u$ in $\sump\ca P$.\end{definition}

The following theorem summarizes the results obtained in~\cite{BalabanGR24} that we will use in this section.

\begin{theorem}[\cite{BalabanGR24}, Theorem 17 + Corollary 20]\label{thm:preprocess}
There is a polynomial-time procedure which takes as input a graph $G$ with feedback edge number $k$ and either outputs an optimal contraction sequence of $G$ of width at most $2$, or a tidy $(H,\ca P)$-graph $G'$ with effectively the same twin-width as $G$ such that $|V(H)|\leq 112k$ and $|\ca P|\leq 4k$.
\end{theorem}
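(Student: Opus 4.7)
The plan is to split the procedure along the twin-width of $G$. If $\tww(G)\le 2$, we invoke Theorem~\ref{thm:tww2} directly to compute an optimal contraction sequence and output it. Since the running time of Theorem~\ref{thm:tww2} is itself fixed-parameter in $k$, we can simply run it and inspect the width of the resulting sequence to decide which branch applies; in what follows we therefore assume $\tww(G)\ge 3$, and our goal becomes to output a tidy $(H,\ca P)$-graph $G'$ with $\tww(G')=_e\tww(G)$ and the claimed size bounds.

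The construction would proceed in three stages. Let $F$ be a minimum feedback edge set with $|F|=k$ and let $T=G-F$ be the resulting spanning forest. Call a vertex \emph{special} if it is incident to an edge of $F$ or has degree at least~$3$ in $T$; a standard counting argument on $T$ yields $\bigoh(k)$ special vertices, with sharp constants giving the figures $2k$ and $4k$ that eventually feed into the final bounds. Deleting the special vertices from $T$ decomposes the rest into (a) \emph{maximal dangling subtrees} hanging off a single special vertex and (b) \emph{internal paths} of degree-$2$ vertices connecting two special vertices. For stage~1, contract each maximal dangling subtree into a single pendant vertex by Observation~\ref{obs:contract-trees}; this never raises any red degree above $2$ and turns each dangling subtree into a red pendant attached to one special vertex. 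For stage~2, shorten every long dangling red path and every long internal degree-$2$ path to a short canonical length, where the canonical length is chosen large enough that the path-local contraction behaviour becomes insensitive to the original length. The key claim here is that contractions along a long dangling or internal path can always be reorganised in any optimal contraction sequence into a ``zipper'' pattern from one end to the other, so trimming such a path preserves twin-width effectively. Stage~3 is a final tidiness cleanup: for each pendant red path whose attachment vertex in $H$ still has a black neighbour in $H$, pull one extra vertex from the path into $H$, or attach an auxiliary absorbing vertex, so that the interface condition of Definition~\ref{def:tidyHP} is met. Tracking the constants through these stages gives $|V(H)|\le 112k$ and $|\ca P|\le 4k$.

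The main obstacle is showing that the path-shortening step is \emph{effective} in the sense of Definition~\ref{def:effective}: given a contraction sequence $C'$ of $G'$ of width $w$, one must constructively produce a contraction sequence of $G$ of width at most $w$. Stage~1 is easy to lift via Observation~\ref{obs:contract-trees}, but stage~2 is delicate, because when a trimmed path is contracted within $C'$ the corresponding long path in $G$ must simultaneously be zippered down without pushing the red degree of either endpoint above the budget allowed by $w$. This requires an interleaving argument showing that the zipper contractions on the long path can be scheduled between successive trigraphs of the lifted sequence without interfering with the contractions happening inside $H$, and it is here that the assumption $\tww(G)\ge 3$ is used to rule out the degenerate corner cases where the red-degree budget is too tight to perform the zipper. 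I would expect this interleaving argument, together with the case analysis establishing the tidiness interface, to carry the bulk of the technical weight---mirroring the role of Theorem~17 and Corollary~20 in~\cite{BalabanGR24}.
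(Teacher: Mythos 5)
Your proposal conflates two genuinely separate steps of the pipeline, and this hides a real gap. Theorem~\ref{thm:preprocess} does \emph{not} bound the total size of $G'$ --- it only bounds $|V(H)|$ and the \emph{number} of paths $|\ca P|$, while the paths in $\ca P$ are still allowed to be arbitrarily long. Crucially, the theorem promises $\tww(G') =_e \tww(G)$, i.e., \emph{exact} (effective) preservation of twin-width. Your Stage~2, which shortens paths to a canonical length, is the content of the separate Lemma~\ref{lem:fen-impr-final-lem} in this paper, and that lemma only yields $\tww(G') \le \tww(G) + 1$. If your ``zipper reorganisation'' claim --- that trimming a long path preserves twin-width exactly --- were correct, then the paper would not need the $+1$ additive slack at all and the final bound $\tww(G) \le \tww(G') \le \tww(G)+1$ would improve to equality; the fact that the authors carefully keep these as two separate stages (an exact preprocessing followed by a lossy path-shortening) strongly indicates that the zipper argument does not give exact preservation in general. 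Optimal sequences can interleave path contractions with work elsewhere in ways that cannot be serialised into a zipper without an extra unit of red degree at the interface.

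Concretely, then: your Stage~1 (contracting dangling trees) and the identification of special/branch vertices are indeed part of the actual preprocessing, and the $\bigoh(k)$ counting of special vertices and degree-2 paths between them is on the right track for the $112k$ and $4k$ bounds. But your Stage~2 must be removed entirely from a proof of this particular theorem --- the output $G'$ of Theorem~\ref{thm:preprocess} keeps the long paths intact; it merely packages them as dangling \emph{red} paths $\ca P$ attached to a bounded-size core $H$, with the interface condition that each attachment vertex of $H$ has black degree~0 and a unique neighbour in $\sump\ca P$. The actual work of Theorem~17 + Corollary~20 of~\cite{BalabanGR24} is establishing \emph{this} structure (turning the spanning-forest paths into red dangling paths while preserving twin-width exactly in both the $\le_e$ and $\ge_e$ directions) --- not the path trimming. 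Your Stage~3 is also too vague to salvage: ``attach an auxiliary absorbing vertex'' changes the graph and is not clearly twin-width-neutral; the tidiness interface has to be engineered by contractions, not by inserting new vertices.
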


From here on, we pursue an entirely different approach than the one used to obtain the previous (non-elementary) kernel~\cite{BalabanGR24}. In Subsection~\ref{sub:givenCH}, we show how a tidy $(H, \ca P)$-graph can be contracted when the paths in $\ca P$ are long enough and a contraction sequence of $H$ is given. This is then used in Subsection~\ref{sub:sec4end}, where we describe a better algorithm for approximating twin-width parameterized by the feedback edge number (see Theorem~\ref{thm:better-algor}).

\subsection{Contracting an $(H,\ca P)$-Graph Using a Contraction Sequence for $H$}\label{sub:givenCH}

For this subsection, let us fix a tidy $(H,\ca P)$-graph $G$ and let $m := |\ca P|$. Assume that each $P \in \ca P$ satisfies $|V(P)| \ge 8m$ and let $F$ be the subtrigraph of $\sump\ca P$ induced by the vertices at distance at most $2m$ from $H$ in $G$.

Informally speaking, our goal now is to construct a ``good'' contraction sequence for such a trigraph $G$, see Corollary~\ref{cor:fen-improvement}. To achieve that, we need to describe some well-structured trigraphs obtained by a sequence of contractions from $G$, which we will call \emph{$G$-tidy trigraphs}, see the following Definition~\ref{def:gtidy}. 
An important property of a $G$-tidy trigraph is that all contractions happened either between two vertices of $H$ or two vertices of $F$ at the same distance from $H$ (see items~\ref{inv:basic} and~\ref{inv:levels}).

\begin{definition}\label{def:gtidy}
Let $G'$ be a trigraph obtained by a sequence of contractions from $G$ and let $H'$ (resp. $F'$) be the subtrigraph of $G'$ induced by the vertices $u$ such that $\beta(u)$ is a subset of $V(H)$ (resp. $V(F)$). 
We say that $G'$ is a \emph{$G$-tidy trigraph} if:
\begin{enumerate}
\item For $u \in V(G')$, we have $u \in V(H') \cup V(F')$ or $|\beta(u)| = 1$.\label{inv:basic}
\item Each $u \in V(H')$ has at most one neighbor outside of $H'$ in $G'$.\label{inv:outs}
\item For each $u \in V(F')$, all vertices in $\beta(u)$ have the same distance $d$ from $H$ in $G$. We say that $d$ is the \emph{level} of $u$.\label{inv:levels}
\item $F'$ is a forest such that all its vertices have degree at most 3 in $G'$. If $T$ is a connected component of $F'$, then:\label{inv:forest}
\begin{enumerate}
\item $T$ has exactly one vertex $r$ at level 1 (let us declare it the root of $T$).
\item The vertices of $T$ with degree $3$ in $G_i$ form a subtree $T'$ of $T$.\label{inv:subtree}
\item The vertices of $T'$ have level at most $|\beta(r)| - 1$, and either $T' = \emptyset$ or $r \in V(T')$.\label{inv:depth}
\end{enumerate}
\end{enumerate}
See Figure~\ref{fig:gtidy} for an illustration.
\end{definition}

\begin{figure}[ht]
\scalebox{1.2}{
\begin{tikzpicture}
\clip(-2,-2.4) rectangle (2.5,1.8);.

\tikzset{
    ff/.style = {draw,black, circle, fill=blue!40!white, minimum width=4pt, inner sep=0pt}}
    
\tikzset{
    deg3/.style = {draw,black, circle, fill=blue, minimum width=4pt, inner sep=0pt}}
    
\tikzset{
    deep/.style = {draw,black, circle, fill=green, minimum width=4pt, inner sep=0pt}}    

\def\points{14};    
    
\tikzset{st1/.style={postaction={decorate,
decoration={markings, mark= at position 1/\points with {\node [ff]{};}, 
mark= at position 2/\points with {\node [ff]{};},
mark= at position 3/\points with {\node [ff]{};}, 
mark= at position 4/\points with {\node [ff]{};}, 
mark= at position 5/\points with {\node [ff]{};}, 
mark= at position 6/\points with {\node [deep]{};},
mark= at position 7/\points with {\node [deep]{};},
mark= at position 8/\points with {\node [ff]{};},
mark= at position 9/\points with {\node [ff]{};}, 
mark= at position 10/\points with {\node [ff]{};},
mark= at position 11/\points with {\node [ff]{};}, 
mark= at position 12/\points with {\node [ff]{};},
mark= at position 13/\points with {\node [ff]{};},
}}}}
      
\def\ppp{12};    
    
\tikzset{st2/.style={postaction={decorate,
decoration={markings, mark= at position 1/\ppp with {\node [ff]{};}, 
mark= at position 2/\ppp with {\node [ff]{};},
mark= at position 3/\ppp with {\node [ff]{};}, 
mark= at position 4/\ppp with {\node [ff]{};}, 
mark= at position 5/\ppp with {\node [ff]{};}, 
mark= at position 6/\ppp with {\node [deep]{};},
mark= at position 7/\ppp with {\node [deep]{};},
mark= at position 8/\ppp with {\node [ff]{};},
mark= at position 9/\ppp with {\node [ff]{};}, 
mark= at position 10/\ppp with {\node [ff]{};},
mark= at position 11/\ppp with {\node [ff]{};}, 
}}}}

\tikzset{st3/.style={postaction={decorate,
decoration={markings, mark= at position 1/11 with {\node [ff]{};}, 
mark= at position 2/11 with {\node [ff]{};},
mark= at position 3/11 with {\node [ff]{};}, 
mark= at position 4/11 with {\node [ff]{};}, 
mark= at position 5/11 with {\node [deep]{};}, 
mark= at position 6/11 with {\node [deep]{};},
mark= at position 7/11 with {\node [ff]{};},
mark= at position 8/11 with {\node [ff]{};},
mark= at position 9/11 with {\node [ff]{};}, 
mark= at position 10/11 with {\node [ff]{};},
}}}}

    \definecolor{Hcolor}{RGB}{182,255,193}
         \fill[gray!20!white] (0,0) circle (1 cm);
    \draw (0,0) circle (1 cm);
         \def\n{3}
    \node at (0,0) {\large $H'$};
         \begin{scope}[every node/.style={draw, circle, fill=gray, minimum width=4pt, inner sep=0pt}]
   
    \foreach \i in {1,...,\n} {
                 \node (a\i) at (360/\n*\i: 1 cm) {};}
    
    \foreach \i in {2,...,\n} {
                 \node[deg3] (b\i) at (360/\n*\i: 1*1.4 cm) {};\draw[red] (a\i)-- (b\i);}
    \end{scope}
   
    \node[ff] (d0) at (b3) [above = 10pt]{};
    \node[deg3] (d1) at (b3) [below = 10pt]{};
    \draw[red] (d0)-- (b3)--(d1);
    
    \draw[red, st1] (b2) to[out=180,in=90, distance=70pt] (a1);
    \draw[red, st2] (b2) to[out=290,in=270, distance=50pt] (d1);
    \draw[red, st3] (d1) to[out=315,in=90, distance=70pt] (d0);
\end{tikzpicture}
}
\caption{An illustration of Definition~\ref{def:gtidy} when $m = 3$. The depicted $G$-tidy trigraph $G'$ consists of $H'$: vertices colored in grey, $F'$: vertices colored in blue (degree-3 vertices in darker shade), and the remaining vertices are colored in green. The edges inside of $H'$ are not depicted (there can be both red and black edges). Note that instead of each pair of green vertices, there should be at least 12 of them (because each path in $\ca P$ should contain at least $8m = 24$ vertices).
\label{fig:gtidy}}
\end{figure}
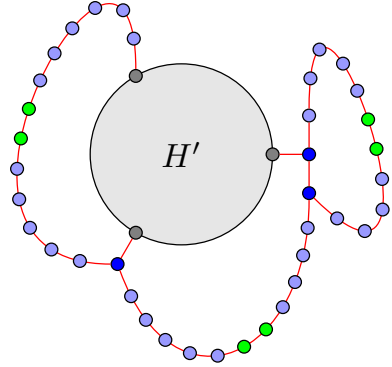

Now we show how $G$ can be reduced to a $G$-tidy trigraph with $H'$ being a single-vertex graph; this will be the first part of the proof of Corollary~\ref{cor:fen-improvement}. Note that the assumption that $C_H$ is given will be later handled in the proof of Lemma~\ref{lem:fen-impr-final-lem}.

\begin{lemma}\label{lem:contrH}
Given a contraction sequence $C_H$ of $H$, one can compute a partial contraction sequence of width $\max(w(C_H)+1, 4)$ from $G$ to a $G$-tidy trigraph $G'$ with $|V(H')| = 1$, in polynomial time.
\end{lemma}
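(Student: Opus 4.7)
The plan is to process $C_H$ one contraction at a time and, for each one, to perform any $F'$-contractions needed to preserve a structural invariant: at every intermediate state, each $u \in V(H')$ has at most one attached subtree $T_u \seq F'$ rooted at a level-$1$ vertex $r_u$, and each $T_u$ satisfies the forest conditions of Definition~\ref{def:gtidy} in isolation --- in particular it is a binary tree of degree at most $3$, and its degree-$3$ subtree $T'_u$ has all vertices at levels $\le |\beta(r_u)| - 1$. Initially this holds trivially, as each $u \in V(H)$ is attached to at most one dangling red path by the tidy $(H,\ca P)$-graph definition.

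For each $C_H$-contraction $u_a, u_b \to u$: if at most one of $u_a, u_b$ has an attached subtree, I simply perform the corresponding $H'$-merge, which yields red degree at most $w(C_H) + 1$ for $u$. The interesting case is when both have attached subtrees $T_a, T_b$ with roots $r_a, r_b$ and bag sizes $k_a, k_b$; here I prepend a cascading bottom-up merge of $T_a$ and $T_b$. The cascade recursively pairs children of $r_a$ with children of $r_b$ (one-to-one when both have two children; if one side is a single path, its only child is paired with one child of the other side), then merges each paired subtree starting from the deepest level upward, and finally merges the roots $r_a, r_b \to r$ before performing the $H'$-merge.

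The key red-degree observation is that when we merge two vertices $v_a, v_b$ at some level $\ell$, their level-$(\ell+1)$ children have already been merged by the recursion and are therefore common children of $v := v_a + v_b$, while their level-$(\ell-1)$ parents have not yet been merged. Hence $v$ has at most $2$ parents and $2$ children, giving red degree at most $4$. This drops back to $3$ as soon as $v$'s parents are subsequently merged. After the final $H'$-merge, $u$ has the merged root $r$ as its sole outside neighbor and red degree at most $w(C_H) + 1$.

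The main obstacle is verifying that the cascading merge preserves the $T'$-depth bound, especially for asymmetric trees. For this I would use induction on the number of $C_H$-contractions processed: if the two subtrees being combined satisfy $d_a \le k_a - 1$ and $d_b \le k_b - 1$ (where $d$ denotes the maximum level of $T'$), then the merged tree's $T'$-depth is at most $\max(d_a, d_b) + 1 \le \max(k_a, k_b) \le k_a + k_b - 1$, as required. Asymmetric merging ($k_a \ne k_b$) requires some care in how the remaining branches of the deeper tree interact with the ``stub'' inherited from the shallower side, but in each case the degree-$3$ and binary-tree properties are preserved. Lastly, the cascade stays entirely within $F$ since the $T'$-depth of the global tree is at most $m$ while $F$ extends to level $2m$, so the first invariant of Definition~\ref{def:gtidy} is respected. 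After all of $C_H$ is processed, we have $|V(H')| = 1$ and $F'$ is a single valid tree, yielding a $G$-tidy trigraph $G'$; the procedure runs in polynomial time since the total number of contractions is $\bigoh(|V(G)|)$.
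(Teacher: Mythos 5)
The high-level structure of your proof matches the paper's: process $C_H$ one contraction at a time, maintaining a $G$-tidy-like invariant, and when both $u_a$ and $u_b$ have attached subtrees, prepend some extra contractions to merge the two trees before performing the $H'$-contraction. The difference --- and the problem --- is in \emph{how} the two trees $T_a$ and $T_b$ are merged.

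Your ``cascading bottom-up merge'' pairs corresponding vertices of $T_a$ and $T_b$ level by level and contracts each pair. But $T_a$ and $T_b$ both extend all the way down to level $2m$ (they are connected components of $F'$, and $F$ has vertices at every distance from $1$ to $2m$ from $H$). Consider the simplest case where both are bare dangling paths. After your cascade, the merged vertex $v$ at level $2m$ has one parent in $F'$ but \emph{two} red neighbors at level $2m+1$ (one surviving from each of $T_a, T_b$, both outside $F$ and both singletons), so $v$ has degree $3$. Its parent, however, ends up with degree $2$. This breaks condition 4(c) of Definition~\ref{def:gtidy} (the degree-$3$ vertex $v$ sits at level $2m > |\beta(r)| - 1$, since $|\beta(r)| \le 2m$) and also condition 4(b) as stated (the degree-$3$ vertices no longer form a subtree containing the root). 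Your remark that ``the cascade stays entirely within $F$ since the $T'$-depth of the global tree is at most $m$'' conflates the depth of $T'$ with the depth of $T$: the cascade, as you describe it, descends to the bottom of $T$, which is level $2m$, not to the bottom of $T'$.

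This is not a bookkeeping issue but a genuine obstruction: once the invariant fails, the width grows without bound. After one merge, the level-$2m$ vertex has $2$ outside red neighbors; after the next $C_H$-contraction involving this tree, it will be paired with the level-$2m$ vertex of yet another tree, and during the contraction it temporarily has two parents plus $3$ outside neighbors, i.e., red degree $5 > 4$. More generally the number of outside red neighbors of the level-$2m$ vertex equals the size of its bag and keeps increasing. The paper's proof sidesteps exactly this danger: it never merges $T_u$ and $T_v$ across trees at deep levels. Instead, it repeatedly contracts two \emph{degree-$2$ sibling} children $x, y$ of the deepest degree-$3$ vertex $w$ of $T'$ \emph{within the same tree}, pushing the degree-$3$ structure one level down while turning $w$ to degree $2$; once the root is degree $2$ in both trees, the roots $r_u, r_v$ are merged, and the deeper portions of the two trees remain as disjoint degree-$2$ chains hanging off the new root. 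That per-tree linearization is the key idea your cascade is missing.
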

\begin{proof}
For each $i \in [|V(H)|]$, we will construct a partial contraction sequence $C_i$ from $G$ to a $G$-tidy trigraph $G_i$ such that $w(C_i) \le \max(w(C_H)+1, 4)$ and the restriction of $C_i$ to $H$ will be the prefix of $C_H$ of length $i$.
We will denote the subtrigraphs of $G_i$ corresponding to $H'$ and $F'$ (see Definition~\ref{def:gtidy}) $H_i$ and $F_i$, respectively.
We define $C_1 = (G)$, i.e., $C_1$ is the trivial partial contraction sequence with no contractions. It can be easily verified that $G_1 := G$ is a $G$-tidy trigraph. In particular, the forest $F_1 := F$ consists of $2m$ disjoint paths.

Suppose that we have constructed $C_i$ for some $i < |V(H)|$. Let $u, v \in V(H_i)$ be the two vertices contracted in $H_{i+1}$ (which is the successor of $H_i$ in $C_H$). If $u$ or $v$ does not have a neighbor outside of $H_i$ in $G_i$, then we define $G_{i+1}$ to be the trigraph obtained from $G_i$ by contracting $u$ and $v$. Clearly, $G_{i+1}$ is a $G$-tidy trigraph and $C_{i+1}$ (the sequence obtained by prolonging $C_i$ with $G_{i+1}$) has the required properties.
Now suppose that both $u$ and $v$ have a neighbor outside of $H_i$ in $G_i$. In this case, we cannot simply contract them because the new vertex would have two neighbors outside of $H_{i+1}$, violating condition~\ref{inv:outs} of Definition~\ref{def:gtidy}.

Let $T_u$ and $T_v$ be the two connected components of $F_i$ with roots adjacent to $u$ and $v$, respectively. Informally, we need to merge $T_u$ and $T_v$ before we can contract $u$ and $v$. Let $T \in \{T_u, T_v\}$ be a tree with root $r$. If $T$ contains no degree-3 vertices, we do nothing (we always mean degree in $G_i$). Otherwise, let $w \in V(T)$ be the deepest degree-3 vertex such that all its ancestors in $T$ have degree 3.
By item~\ref{inv:levels} of Definition~\ref{def:gtidy}, $|\beta(r)| \le 2m$ because $F$ contains exactly $2m$ vertices at level 1 (2 for each path in $\ca P$). 
Hence, by item~\ref{inv:depth}, the level of $w$ is less than $2m$, and so $w$ has two children $x$ and $y$ in $T$, both of degree 2. We contract $x$ and $y$ (note that the obtained vertex $xy$ has red degree 3, and the red degree of $w$ drops to 2). We repeat this process as long as such vertex $w$ exists (crucially, $xy$ cannot be chosen as the next $w$ because its parent has degree 2). Afterwards, we contract the roots $r_u$, $r_v$ of $T_u$ and $T_v$, and finally, we contract $u$ and $v$ into $uv$.

Let $C_{i+1}$ be the partial contraction sequence of $G$ obtained by prolonging $C_i$ with the contractions described in the previous paragraph.  Let us show that $w(C_{i+1}) \le \max(w(C_H)+1, 4)$. By the assumption about $C_i$, it suffices to discuss red degrees in each trigraph $G'$ between $G_i$ and $G_{i+1}$ (which is the last trigraph in $C_{i+1}$). Clearly, any descendant of $H$ in $G'$ has red degree at most $w(C_H)+1$ (it is crucial that $u$ and $v$ are contracted after $r_u$ and $r_v$). Any other vertex of $G'$ has red degree at most 3, except for the vertex obtained by contracting $r_u$ and $r_v$, whose red degree is 4 (but it drops to 3 when $u$ and $v$ are contracted).

Finally, we need to show that $G_{i+1}$ is $G$-tidy. It is easy to see that $G_{i+1}$ satisfies the first three items of Definition~\ref{def:gtidy}. To prove that $G_{i+1}$ satisfies item~\ref{inv:forest}, observe that $F_{i+1}$ is indeed a forest: it contains the same trees as $F_i$, except that $T_u$ and $T_v$ have been merged into a new tree $T$ with root $r$.
More precisely, $T$ is isomorphic to the tree obtained from the disjoint union of $T_u$ and $T_v$ by first adding a new vertex $r$ and edges $rr_u, rr_v$, and second removing all leaves. Since $r$ has degree 3 in $G_{i+1}$, the highest level of a degree-3 vertex in $T$ is one higher than in the union of $T_u$ and $T_v$ in $G_i$.
Since $|\beta(r)| = |\beta(r_u)| + |\beta(r_v)|$ and both of these summands are at least 1, we get that $G_{i+1}$ satisfies item~\ref{inv:depth}, which concludes the proof.
\end{proof}

To prove Corollary~\ref{cor:fen-improvement}, we now show that the $G$-tidy trigraph given by Lemma~\ref{lem:contrH} can be contracted to a single vertex (without creating vertices with high red degree). Note that the following proof is inspired by the proof of Theorem~7 in~\cite{BergeBD22}.

\begin{lemma}\label{lem:contrP}
If $G'$ is a $G$-tidy trigraph $G'$ with $|V(H')| = 1$, then a contraction sequence of $G'$ of width at most 4 can be computed in polynomial time.
\end{lemma}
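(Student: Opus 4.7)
The plan is to construct the contraction sequence of $G'$ in two phases: Phase~1 compresses every long dangling red path into a single \emph{chord vertex}, and Phase~2 sweeps the remaining tree bottom-up into~$h$. Throughout, the red degree must stay at most~$4$.

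For Phase~1, I would fix one long red path $\ell - x_1 - x_2 - \cdots - x_L - \ell'$ joining two leaves $\ell,\ell'$ of the unique tree~$T$ in~$F'$ (uniqueness follows from $|V(H')|=1$ together with Definition~\ref{def:gtidy}). Each $x_i$ is a singleton of red degree~$2$, so contracting adjacent pairs $(x_i,x_{i+1})$ inward from one end absorbs the whole chain into a single vertex $x^\star$ red-adjacent only to $\ell$ and~$\ell'$, without ever exceeding red degree~$2$ on the path. Contracting $x^\star$ with~$\ell$ and then with~$\ell'$ yields a chord vertex $v$ red-adjacent to the parents $p,p'$ of $\ell,\ell'$ in~$T$, with red degree at most~$2$ and leaving the red degrees of $p,p'$ unchanged. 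Applying this to each of the $m$ long paths replaces $G'$ by a trigraph $G^\star$ consisting of~$h$, the tree $T^\circ$ obtained from $T$ by deleting its leaves, and $m$ chord vertices each red-adjacent to two vertices of $T^\circ$ (or twice to the same vertex, in which case it becomes a red pendant).

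For Phase~2, I would sweep $T^\circ$ from bottom up. A direct degree count shows that every vertex $p$ of $T^\circ$ still has degree at most~$3$ in~$G^\star$: each leaf-child of $p$ in~$T$ is replaced by exactly one incident chord vertex, so the local degree is preserved. At each step of the sweep I pick a deepest branch vertex $w\in T'$ whose descendants in $T^\circ$ lie entirely on tails ending at leaves of $T^\circ$ with attached chord vertices, and I absorb its subtree into~$w$. The two tails below $w$ are first collapsed into single vertices by the same pairwise sliding as in Phase~1, then the chord vertices at the tail-ends are handled, and finally the collapsed tails are merged into~$w$. The key case split is whether the two tails below $w$ meet the same chord vertex, in which case a cycle closes at~$w$ with intermediate red degree at most~$3$, or two different chord vertices, in which case $w$ temporarily acquires extra red chord neighbours, peaking at red degree at most~$4$.

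The main obstacle is the bookkeeping in Phase~2: the order of the contractions is critical, since absorbing two chord vertices of~$w$ simultaneously, together with an already-present red edge to~$w$'s parent and to the other tail, could push the red degree strictly above~$4$. Adopting the sequencing technique from \cite{BergeBD22}, I first slide each chord vertex along its tail all the way down to~$w$ before even touching the next one, so that at most one new red edge appears at~$w$ between consecutive contractions, and every chord vertex whose two anchor-vertices have both been absorbed is immediately contracted away as a pendant. When the sweep reaches the root, a single final contraction absorbs~$h$ and the remaining vertex. Each phase performs only polynomially many contractions, so the entire procedure runs in polynomial time and produces a contraction sequence of $G'$ of width at most~$4$ as required.
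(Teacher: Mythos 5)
The paper's proof handles the long green paths in a fundamentally different way than you do: instead of first collapsing each such path to a single \emph{chord vertex} and later cleaning up the tree, it observes that each path $P$ attaches to two vertices $v,v'$ of the degree-$3$ subtree $T$, shortens $P$ to the length of the tree-path $Q$ between $v$ and $v'$, and then \emph{wraps} $P$ onto $Q$ by contracting $u_i$ with $v_i$ one position at a time. Crucially, after $P$ has been fully wrapped, the trigraph is isomorphic to the original with $P$ simply deleted, so $T$ is restored to its original red-degree profile before the next path is touched. During the wrapping only a single ``dangling'' vertex $u_{i+1}$ contributes to the red degree of the merged vertex, which is why the bound of $4$ holds at all times.

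Your Phase~2 has a genuine gap. After Phase~1 you are left with $T^\circ$ plus $m$ chord vertices acting as feedback edges. Your sweep absorbs the subtree below a deepest branch vertex $w$ into $w$, and you only analyse the red degree incurred \emph{during that one step}: if the two tails below $w$ end at distinct chords $c_1,c_2$ then $w$ ends up with degree $3$ (parent plus $c_1,c_2$) and the transient peak is $4$. But a chord vertex whose other anchor is \emph{not} in $w$'s subtree remains a red neighbour of $w$ forever (your rule only removes a chord once \emph{both} anchors are absorbed). When the sweep continues to an ancestor $w'$, that ancestor absorbs $w$ together with all of $w$'s still-active chords, and also the chords from $w'$'s other subtree. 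Nothing prevents $w'$ from simultaneously carrying $c_1,c_2,c_3,c_4$ and its own parent edge, i.e.\ red degree $5$, and in fact the count can grow up to $\Theta(m)$ if the chord endpoints are paired across far-apart subtrees (which a $G$-tidy trigraph does not forbid). Your ``slide one chord at a time'' refinement controls the \emph{incremental} increase at $w$ between consecutive contractions, not the \emph{cumulative} number of still-active chord vertices accumulated by a super-vertex higher up the tree. This accumulation is exactly what the paper's wrap-and-erase strategy is designed to avoid, and it is the step your argument does not supply.
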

\begin{proof}

First, observe that by Definition~\ref{def:tidyHP}, all edges in $G'$ are red. By item~\ref{inv:outs} of Definition~\ref{def:gtidy}, the only vertex $u$ of $H'$ has a single neighbor $r$.
By Definition~\ref{def:tidyHP}, $G$ is connected; hence, also $G'$ is connected. This implies that $F'$ is a tree. Let $T$ be the subtree of $F'$ induced by the vertices with degree 3 in $G'$.
Let us begin by contracting $u$ and $r$, obtaining a trigraph $G^* := G' - u$. 

Observe that the depth of $T$ is at most $2m - 1$ because $r$ contains $2m$ vertices in its bag (by property~\ref{inv:depth}).
Consider a path $P \in \ca P$ and observe that the descendants of at most $4m$ vertices of $P$ belong to $T$ in $G^*$ ($2m$ from each side). 
Hence, $G^* - T$ consists of disjoint dangling red paths, each with at least $4m$ vertices (since each $P \in \ca P$ satisfies $|V(P)| \ge 8m$).
Let $\ca P'$ be the set of these red paths in $G^*$.

Let $P \in \ca P'$ and let $u$ and $u'$ be the endpoints of $P$. Let $v, v' \in V(T)$ be the neighbors of $u$ and $u'$ in $T$, respectively, and let $Q$ be the path connecting $v$ and $v'$ in $T$. Since the depth of $T$ is at most $2m -1$, we know that $Q$ contains at most $4m-1$ vertices. Let us shorten $P$ so that it has the same length as $Q$ (by repeatedly contracting consecutive vertices). Let $(u_1 = u, \ldots, u_p = u')$ and $(v_1 = v, \ldots, v_p = v')$ be the sequences of vertices of $P$ and $Q$ in the natural orders. Now for each $i \in [p]$ in increasing order, contract $u_i$ and $v_i$, and observe that the obtained trigraph is isomorphic to $G^* - P$. Repeat this for all paths $P \in \ca P'$, obtaining a trigraph isomorphic to $T$
, which has twin-width at most 3 and can be contracted as per Observation~\ref{obs:contract-trees}.
Finally, observe that during a contraction of a path in $P \in \ca P'$, there is never a vertex with red degree higher than $4$. Indeed, after contracting $u_i$ and $v_i$ for $i \in [p-1]$, the obtained vertex has at most four red neighbors: at most three in $T$ plus $u_{i+1}$.
\end{proof}

\subsection{Wrapping up the Proof}\label{sub:sec4end}

In the previous subsection, we proved Lemmas~\ref{lem:contrH} and~\ref{lem:contrP}, which together imply the following corollary.

\begin{corollary}\label{cor:fen-improvement}
Let $G$ be a tidy $(H,\ca P)$-graph such that each $P \in \ca P$ satisfies $|V(P)| \ge 8\cdot|\ca P|$. Given a contraction sequence $C_H$ of $H$, one can compute a contraction sequence of $G$ of width $\max(w(C_H)+1, 4)$, in polynomial time.
\end{corollary}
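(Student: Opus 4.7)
The plan is to obtain the corollary essentially by concatenating the partial contraction sequences produced by Lemmas~\ref{lem:contrH} and~\ref{lem:contrP}. First I would invoke Lemma~\ref{lem:contrH} on the given tidy $(H, \ca P)$-graph $G$ together with the supplied contraction sequence $C_H$ of $H$. This yields, in polynomial time, a partial contraction sequence $C_1$ of $G$ of width at most $\max(w(C_H)+1, 4)$ ending in a $G$-tidy trigraph $G'$ with $|V(H')| = 1$. Note that the hypothesis $|V(P)| \ge 8\cdot|\ca P|$ is not needed at this stage, since Lemma~\ref{lem:contrH} only uses the structural properties of $G$-tidy trigraphs and does not touch the deeper portions of the paths in $\ca P$.

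Next I would feed the resulting trigraph $G'$ into Lemma~\ref{lem:contrP}. This produces, in polynomial time, a contraction sequence $C_2$ of $G'$ of width at most $4$ that contracts $G'$ to a single vertex. Here the path-length assumption $|V(P)| \ge 8\cdot|\ca P|$ in the hypothesis is exactly what Lemma~\ref{lem:contrP} needs to ensure that the subtrigraph $F'$ has the necessary depth bound and that each dangling red path remaining outside of the degree-$3$ subtree $T$ is long enough to be shortened and then aligned with the corresponding path in $T$.

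Finally, I would concatenate $C_1$ and $C_2$ into a single contraction sequence $C$ of $G$. The width of the concatenation is the maximum of the widths of the two parts, which is $\max(\max(w(C_H)+1, 4), 4) = \max(w(C_H)+1, 4)$, as claimed. Since both sub-procedures run in polynomial time and the concatenation is trivial, the overall running time remains polynomial. There is no real obstacle here beyond checking that the output of Lemma~\ref{lem:contrH} satisfies the input requirements of Lemma~\ref{lem:contrP} (i.e., being a $G$-tidy trigraph with $|V(H')| = 1$), which is built into the statement of Lemma~\ref{lem:contrH}, so the corollary follows immediately.
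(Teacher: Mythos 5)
Your proof is correct and matches the paper's intent exactly: the paper explicitly states that Lemmas~\ref{lem:contrH} and~\ref{lem:contrP} ``together imply the following corollary'' and gives no further argument, so concatenating the two (partial) contraction sequences and taking the maximum of the widths is precisely the intended proof. One small inaccuracy: the assumption $|V(P)| \ge 8|\ca P|$ is in fact also used in Lemma~\ref{lem:contrH} (it is a standing hypothesis of the subsection that guarantees $F$ consists of $2m$ disjoint paths of depth $2m$, which the argument about the degree-3 subtree in $F'$ relies on), though this does not affect the validity of your concatenation argument.
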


Now we are able to show that if we shorten all long paths in a tidy $(H, \ca P)$-graph, then the twin-width increases by at most 1.  

\begin{lemma}\label{lem:fen-impr-final-lem}
Let $G$ be a tidy $(H_0, \ca P_0)$-graph such that $\tww(G) \ge 3$, let $m = |\ca P_0|$ and let $G'$ be the trigraph obtained from $G$ by shortening each path $P\in\ca P_0$ with more than $8\cdot m$ vertices to length exactly $8\cdot m-1$. Then $\tww(G') \le \tww(G) + 1$.
\end{lemma}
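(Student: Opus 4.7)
The plan is to reuse the contraction machinery from Subsection~\ref{sub:givenCH} by reinterpreting $G'$ via a new tidy decomposition in which the shortened paths are the only ``long'' paths and the untouched short paths of $\ca P_0$ are absorbed into the hub. Set $t := \tww(G)$ and let $C$ be an optimal contraction sequence of $G$. We may assume $\tww(G') \ge 2$ and that at least one path was shortened, as otherwise the claim is immediate ($G' = G$ in the latter case, and $\tww(G)+1 \ge 4$ suffices in the former).

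Concretely, let $\ca P'_0$ consist of the (shortened) paths obtained from those $P \in \ca P_0$ with $|V(P)| > 8m$; each such path has exactly $8m$ vertices in $G'$. Let $H'_0$ be the subtrigraph of $G'$ induced on $V(G') \setminus V(\sump \ca P'_0)$, which consists of $H_0$ together with every path of $\ca P_0$ that was \emph{not} shortened. I would first verify that $(H'_0, \ca P'_0)$ forms a tidy decomposition of $G'$ in the sense of Definition~\ref{def:tidyHP}: each member of $\ca P'_0$ is still a dangling red path in $G'$, since shortening preserves both the endpoints' external neighborhoods and the red color of internal and boundary edges; and the only vertices of $H'_0$ that have a neighbor in $\sump \ca P'_0$ are those vertices of $H_0$ that already bordered a long path in $G$, which in turn already had black degree $0$ and a unique such neighbor in the decomposition $(H_0, \ca P_0)$. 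Next, since the short paths are identical in $G$ and $G'$, the vertex set $V(H'_0)$ induces the same subtrigraph in both graphs, so the restriction $C_{H'_0} := C[H'_0]$ (Definition~\ref{def:restriction}) is a well-defined contraction sequence of $H'_0$; its width is bounded by $w(C) = t$, since any red edge in $C_{H'_0}$ must be witnessed by a red edge or an adjacency asymmetry within $H'_0 \seq G$, and each such witness persists between the corresponding enlarged bags in $C$ and therefore forces a red edge there as well.

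To finish, I apply Corollary~\ref{cor:fen-improvement} to the tidy decomposition $(H'_0, \ca P'_0)$ of $G'$: the length hypothesis is satisfied because $|V(P)| = 8m \ge 8|\ca P'_0|$ (using $|\ca P'_0| \le m$), and feeding $C_{H'_0}$ into the corollary produces a contraction sequence of $G'$ of width $\max(w(C_{H'_0})+1,\,4) \le \max(t+1,\,4) = t+1$, where the final equality uses the assumption $\tww(G) \ge 3$. Hence $\tww(G') \le \tww(G) + 1$, as desired. The main obstacle in fleshing out the above is the restriction-preserves-width claim: while intuitive, it calls for careful bookkeeping showing that the red edges and adjacency asymmetries producing red edges in $C[H'_0]$ always lift to red edges in the parallel positions of $C$; once that is established, the rest is routine verification of the tidy decomposition conditions.
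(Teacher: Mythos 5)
Your proof is correct and takes essentially the same approach as the paper: absorb the paths that are left untouched into the hub, observe that $G'$ is still a tidy graph with the shortened paths as the new $\ca P$, restrict $C$ to the new hub to get a sequence of width at most $\tww(G)$, and feed this into Corollary~\ref{cor:fen-improvement}. The only (immaterial) deviation is that you also absorb paths of exactly $8m$ vertices into the hub while the paper keeps them in $\ca P$, and you are slightly more explicit about the degenerate cases ($\tww(G')<2$ or no path shortened), which the paper glosses over; the ``restriction preserves width'' step you flag as needing bookkeeping is likewise taken for granted in the paper.
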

\begin{proof}
We begin by handling short paths in $\ca P_0$: let $\ca P_{short} = \{P \in \ca P_0\colon |V(P)| < 8m\}$, let $H$ be the union of $H_0$ and $\sqcup \ca P_{short}$ (including the edges between them), and let $\ca P = \ca P_0 \setminus \ca P_{short}$. Clearly, $G$ is also a tidy $(H, \ca P)$-graph. Also observe that $G'$ is a tidy $(H, \ca P')$-graph (where $\ca P'$ is the set of paths obtained from $\ca P$ by shortening each path in it).

We want to construct a contraction sequence $C'$ of $G'$ of width at most $\tww(G) + 1$ from an optimal contraction sequence $C$ of $G$. Let $C_H$ be the restriction of $C$ to $H$; clearly, $w(C_H) \le \tww(G)$. Since $\tww(G) \ge 3$, it suffices to apply Corollary~\ref{cor:fen-improvement} on $G'$ using $C_H$, which yields the desired contraction sequence $C'$.
\end{proof}

Finally, we are able to prove the main result of this section.

\begin{theorem}\label{thm:better-algor}
Given a graph $G$ with feedback edge number $k$, a trigraph $G'$ of size $\ca O(k^2)$ such that
$\tww(G) \le \tww(G') \le \tww(G)+1$
can be computed in polynomial time. Moreover, a contraction sequence for $G$ of width at most $\tww(G)+1$ can be computed in time $2^{\bigoh(k^2\cdot \log k)}+n^{\bigoh(1)}$.
\end{theorem}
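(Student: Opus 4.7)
The plan is to combine the preprocessing of Theorem~\ref{thm:preprocess} with the shortening of Lemma~\ref{lem:fen-impr-final-lem}, and then brute-force the reduced trigraph. The case $\tww(G)\le 2$ is already handled by Theorem~\ref{thm:tww2} within the claimed time bound, with a small canonical $G'$ of twin-width at most $2$ chosen separately; I therefore focus on the case $\tww(G)\ge 3$, which is the hypothesis Lemma~\ref{lem:fen-impr-final-lem} requires.

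I would first apply Theorem~\ref{thm:preprocess} to obtain, in polynomial time, a tidy $(H,\ca P)$-graph $G_1$ with $|V(H)|\le 112k$, $m:=|\ca P|\le 4k$, and $\tww(G_1)=_e\tww(G)$. Setting $G'$ to be the trigraph produced by Lemma~\ref{lem:fen-impr-final-lem} (shortening every path in $\ca P$ with more than $8m$ vertices down to exactly $8m-1$), a direct count gives $|V(G')|\le 112k+4k\cdot 8m\in\ca O(k^2)$, and the lemma yields the upper bound $\tww(G')\le\tww(G_1)+1=\tww(G)+1$.

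For the matching inequality $\tww(G)\le\tww(G')$, I would observe that $G'$ arises from $G_1$ by a sequence of contractions of pairs of adjacent degree-$2$ vertices on the red paths in $\ca P$; each such contraction involves only red edges and keeps the maximum red degree at $2$. Hence, given any contraction sequence $C'$ of $G'$, prepending these shortening contractions produces a contraction sequence of $G_1$ of width $\max(2,w(C'))=w(C')$ (since $w(C')\ge\tww(G')\ge 3$). Taking $C'$ optimal yields $\tww(G_1)\le\tww(G')$, and composition with $\tww(G_1)=_e\tww(G)$ gives $\tww(G)\le\tww(G')$.

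For the algorithmic part, I would apply Observation~\ref{obs:brute-force} to $G'$: since $|V(G')|\in\ca O(k^2)$, an optimal contraction sequence $C'$ of $G'$ can be computed in time $2^{\bigoh(k^2\cdot\log k)}$. Prepending the shortening contractions provides a contraction sequence of $G_1$ of width at most $\tww(G')\le\tww(G)+1$, and the effective equivalence from Theorem~\ref{thm:preprocess} lifts this in polynomial time to a contraction sequence of $G$ of the same width. I expect the main conceptual obstacle to be the lower-bound direction $\tww(G)\le\tww(G')$, since Lemma~\ref{lem:fen-impr-final-lem} supplies only the upper direction; this is addressed by the prepending argument above, which relies crucially on the fact that shortening a red dangling path never inflates red degrees beyond $2$, so the shortening steps can be safely placed at the start of any contraction sequence of $G'$.
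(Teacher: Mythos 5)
Your proposal follows the paper's proof step for step: dispatch $\tww(G)\le 2$ via Theorem~\ref{thm:tww2}, invoke Theorem~\ref{thm:preprocess} to get the tidy $(H,\ca P)$-graph $G_1$, apply Lemma~\ref{lem:fen-impr-final-lem} to obtain $G'$, establish $\tww(G_1)\le\tww(G')$ by prepending the path-shortening contractions, brute-force $G'$ via Observation~\ref{obs:brute-force}, and lift back through the effectiveness of Theorem~\ref{thm:preprocess}; this is exactly the paper's argument. One small imprecision worth fixing: the parenthetical ``since $w(C')\ge\tww(G')\ge 3$'' is not justified at that point in the argument and is also stronger than needed---the shortening contractions keep the max red degree of the \emph{current trigraph} bounded by $\max(2,\text{max red degree of }G_1)$, and since $G_1$ and $G'$ have the same max red degree (which is at most $w(C')$) and $w(C')\ge 2$ (because $G'$, being a tidy $(H,\ca P')$-graph with long red paths, already contains a vertex of red degree $2$), the prepended sequence has width exactly $w(C')$.
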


\begin{proof}
First, we use Theorem~\ref{thm:tww2} to check whether $\tww(G) \le 2$ (if yes, $G'$ can be any constant-size graph with the same twin-width as $G$). From now on, assume $\tww(G) \ge 3$. Now let us use Theorem~\ref{thm:preprocess} to obtain a tidy $(H, \ca P)$-graph $G_1$ with effectively the same twin-width as $G$ such that $|V(H)| \le 112k$ and $|\ca P| \le 4k$. Let $G'$ be the trigraph obtained when Lemma~\ref{lem:fen-impr-final-lem} is applied on $G_1$. By Lemma~\ref{lem:fen-impr-final-lem}, $\tww(G') \le \tww(G_1) + 1$. Conversely, $\tww(G') \ge \tww(G_1)$ because there is a partial contraction sequence $C_1$ from $G_1$ to $G'$ of width at most $\tww(G')$; it suffices to shorten paths of $\ca P$ that are shorter in $G'$ than in $G_1$ by contracting consecutive vertices. Hence, we indeed have $\tww(G) \le \tww(G') \le \tww(G)+1$.

Next, let us examine the size of $G'$. By Lemma~\ref{lem:fen-impr-final-lem}, each of the $4k$ paths in $\ca P$ has at most $8\cdot 4k$ vertices in $G'$. Hence, we obtain $|V(G')| \le 128k^2 + 112k \in \ca O(k^2)$ as required.

Finally, let us show how a contraction sequence for $G$ of width at most $\tww(G)+1$ can be computed.
If $\tww(G) \le 2$, then this contraction sequence is provided by Theorem~\ref{thm:tww2}. Otherwise, observe that an optimal contraction sequence $C'$ of $G'$ can be computed in time $2^{\bigoh(k^2\cdot \log k)}$ by Observation~\ref{obs:brute-force}. Next we concatenate $C'$ and $C_1$ (which is defined above and can be computed trivially) to obtain a contraction sequence of $G_1$ of width at most $\tww(G_1)+1$. We conclude using the effectiveness part of $\tww(G) =_e \tww(G_1)$ (see Definition~\ref{def:effective}).
\end{proof}

\section{A Fixed-Parameter Algorithm Parameterized by Vertex Integrity}\label{sec:vi}

\newcommand{\con}{\ca C}
In this section, we design an FPT 2-approximation algorithm for computing twin-width when parameterized by the vertex integrity, see Theorem~\ref{thm:param-by-VI}. 

\subsection{Initial Setup and Overview}\label{sub:5setup}

For the following, it will be useful to recall the definition of vertex integrity and its associated decomposition presented in Section~\ref{sec:prelims}. Let us fix a graph $G$ and a choice of $S\subseteq V(G)$ witnessing that the vertex integrity of an input graph $G$ is $p$, and let $\con$ be the set of connected components of $G - S$. We assume without loss of generality that $G$ is connected, as the twin-width of a graph is the maximum twin-width of its connected components. 
We now define a notion of ``component-types'' which intuitively captures the equivalence between components which exhibit the same outside connections and internal structure.

\begin{definition}\label{def:equivalence}
We say that two graphs $H_0, H_1 \in \con$ are \emph{twin-blocks}, denoted $H_0 \sim H_1$, if there exist a \emph{canonical} isomorphism $\alpha$ from $H_0$ to $H_1$ such that for each vertex $u\in V(H_0)$ and each $v\in S$, $uv\in E(G)$ if and only if $\alpha(u)v \in E(G)$. 
Clearly, $\sim$ is an equivalence relation.
\end{definition}

In a nutshell, our algorithm first computes an optimal contraction sequence $C'$ for a subgraph $G'$ of $G$ that is obtained by keeping only a bounded number of twin-blocks from each equivalence class, and then uses $C'$ to obtain a contraction sequence for $G$ of width at most $2\cdot \tww(G') \le 2\cdot \tww(G)$. In the following definition, we introduce terminology related to subgraphs of $G$.

\begin{definition}\label{def:G'}
Let $G'$ be an induced subgraph of $G$.
\begin{itemize}
\item We say that $G'$ is \emph{$\con$-respecting} if $S \seq V(G')$ and for each $H \in \con$, either $H \seq G'$ or $V(H) \cap V(G') = \emptyset$.
\item We say that an equivalence class $[H_0]$ of $\sim$ is \emph{large in $G'$} if $|\ca H| \ge f(p)$, where $\ca H = \{H \in [H_0] \sep H \seq G'\}$ and $f(p) = 2^{7p^3}$.
\item We say that $G'$ is the \emph{reduced graph} of $G$ if it is obtained from $G$ by removing all but $f(p)$ twin-blocks from each large class of $\sim$.
\end{itemize}
\end{definition}

Let us now bound the size of the reduced graph $G'$.

\begin{observation}\label{obs:G'-size}
If $G'$ is the reduced graph of $G$, then $|V(G')| \le p+p^2\cdot f(p)\cdot 2^{2p^2} \in 2^{\ca O(p^3)}$.
\end{observation}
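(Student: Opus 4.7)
The plan is to bound $|V(G')|$ by separately counting the vertices in $S$ and the vertices contained in components of $G'-S$. First, I would note that $|S|\le p$, and that every $H\in\con$ has $|V(H)|\le p$ since the definition of vertex integrity guarantees $|V(H)\cup S|\le p$.

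Next, I would bound the number of equivalence classes of $\sim$ on $\con$. The key observation is that the class of a component $H$ is determined by two pieces of data: the (abstract) isomorphism type of $H$ as a graph, and the attachment pattern of its vertices to $S$, i.e.\ the map $V(H)\to 2^S$ sending each $u\in V(H)$ to $\{v\in S \sep uv\in E(G)\}$. Two components are equivalent under $\sim$ precisely when both pieces of data match under a common bijection. I would upper bound the number of such classes by counting the number of labeled structures (graph on $k$ vertices plus attachment map) for each fixed $k\in[p]$: this gives at most $2^{\binom{k}{2}}\cdot (2^{|S|})^{k}\le 2^{p^2/2}\cdot 2^{p^2}$ structures of a given size, and summing over $k\in[p]$ yields at most $p\cdot 2^{3p^2/2}\le p\cdot 2^{2p^2}$ equivalence classes in total.

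With these two bounds in hand, I would invoke the definition of the reduced graph: each large class of $\sim$ contributes exactly $f(p)$ twin-blocks to $G'$, while each small class contributes fewer than $f(p)$ twin-blocks (by the definition of ``large''). Hence every class contributes at most $f(p)$ components to $G'$, each of size at most $p$. Multiplying through gives
\[
|V(G')|\;\le\;|S|+(\text{\#\,classes})\cdot f(p)\cdot p\;\le\;p+\bigl(p\cdot 2^{2p^2}\bigr)\cdot f(p)\cdot p\;=\;p+p^2\cdot f(p)\cdot 2^{2p^2}.
\]
Finally, substituting $f(p)=2^{7p^3}$ shows that this bound is $2^{\ca O(p^3)}$, since the $2^{7p^3}$ factor dominates the $2^{2p^2}$ and $p^2$ factors.

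I do not foresee a genuine obstacle here: the statement is essentially a counting observation, and the only point requiring some care is identifying the correct invariant of an equivalence class under $\sim$ (the isomorphism type together with the attachment pattern to $S$) so that the bound on classes cleanly composes with the per-class bound of $f(p)$ coming from the construction of the reduced graph.
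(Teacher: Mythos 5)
Your proof is correct and follows essentially the same route as the paper: bound $|S|$ by $p$, bound the number of $\sim$-classes by $p\cdot 2^{2p^2}$ (the paper simply multiplies the count of non-isomorphic $\le p$-vertex graphs, $p\cdot 2^{p^2}$, by the $2^{p^2}$ attachment patterns to $S$), and then multiply by $p\cdot f(p)$ vertices per retained class. The only difference is a slightly more careful intermediate bound $2^{3p^2/2}$ before rounding up, which changes nothing.
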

\begin{proof}
First, let us compute the size of $\con/{\sim}$. Each $H \in \con$ has at most $p$ vertices, which means that the number of non-isomorphic graphs in $\con$ can be upper-bounded by $p \cdot 2^{p^2}$. Since $|S| \le p$, there are at most  $p^2$ possible edges between $S$ and each $H \in \con$. Hence, $|\con/{\sim}| \le p \cdot 2^{2p^2}$.
Because $|V(H)| \le p$ for each $H \in \con$ and by definition of $G'$, the union of each class of $\sim$ contains at most $p \cdot f(p)$ vertices. Finally, we again use that $|S| \le p$.
\end{proof}

The core of our algorithm is the following lemma, which we will prove in Subsection~\ref{sub:sec5}:

\begin{lemma}\label{lem:CfromC'}
If $G'$ is the reduced graph of $G$, then given a contraction sequence $C'$ for $G'$ of width $t$, we can compute a contraction sequence for $G$ of width at most $2t$ in polynomial time.
\end{lemma}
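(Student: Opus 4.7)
The plan is to build $C$ in two stages: first pre-contract each non-representative twin-block of a large class into an assigned representative via the canonical isomorphism from Definition~\ref{def:equivalence}, and then simulate $C'$ on the resulting reduced trigraph. The enabling structural observation is that twin-blocks in the same equivalence class of $\sim$ share identical connections to $S$; hence contracting a non-representative $N$ into a representative $R$ via the canonical isomorphism $\alpha$ creates no red edges to $S$ nor to components outside the merged block.

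Concretely, for each large class with representatives $R_1,\ldots,R_{f(p)}$ in $G'$ and non-representatives $N_1,\ldots,N_m$ in $G\setminus G'$, I would assign each $N_j$ to a representative $R_{\pi(j)}$ with the assignment distributing the non-representatives as evenly as possible. In stage one, I process each $N_j$ separately: using the restriction $C'[R_{\pi(j)}]$ given by Definition~\ref{def:restriction}, which is a contraction sequence of $R_{\pi(j)}$ of width at most $t$, I apply its $\alpha$-translation to $V(N_j)$ in $G$, contracting $N_j$ down to a single vertex, and then fuse it into the bag of the corresponding top-level contraction of $R_{\pi(j)}$. The twin-block property ensures that throughout, red edges are confined to $V(N_j)$ together with its $S$-connections, which faithfully mirror those of $C'$ on $V(R_{\pi(j)})\cup S$ inside $G'$.

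In stage two, I simulate $C'$ on the resulting trigraph $\tilde G$: each contraction in $C'$ of vertices $u,v \in V(G')$ is translated to contracting the corresponding super-vertices of $\tilde G$, whose bags consist of the $G'$-vertex together with its parallel copies in absorbed non-representatives. The red degree of any such super-vertex in the simulation splits into two contributions: red edges inherited from $C'$'s sequence (at most $t$ by hypothesis), and extra red edges arising from the ``red-ified'' internal structure of the large-class representatives produced during stage one. The former already fits within the $2t$ budget; the latter needs a delicate argument using the twin-block structure.

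The main obstacle, and the heart of the proof, will be establishing that the extra red edges introduced during stages one and two can be charged against already-existing red edges of $C'$ in $G'$, yielding the $2t$ bound. I expect this to leverage the large threshold $f(p)=2^{7p^3}$: via a pigeonhole argument over the (roughly $2^{\bigoh(p^2)}$) possible signatures of bags within each equivalence class, one should be able to argue that $C'$ essentially has to treat most representatives of each large class uniformly, which in turn allows each ``extra'' red edge in the simulation to be mapped injectively to an original red edge of $C'$. Formalizing this correspondence while preserving the width bound across both stages, and handling the interaction between the absorbed $N_j$'s $S$-connections and the bag signatures inside $C'$, is where I anticipate the proof to require the most care.
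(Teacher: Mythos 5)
There is a genuine gap: your stage one, as described, would not stay within the $2t$ budget, and the reason you cannot close it is that you are inserting the contraction of each $N_j$ at the \emph{wrong time} in the sequence. You propose to mirror $C'[R_{\pi(j)}]$ on $V(N_j)$ at the very beginning of $C$, before any contractions inside $S$ have occurred. But the width bound $w(C'[R_{\pi(j)}]) \le t$ only controls the red edges \emph{internal} to $R_{\pi(j)}$; it says nothing about red edges from a descendant of $N_j$ to $S$. At a given moment of $C'$, a descendant $u'$ of $R_{\pi(j)}$ may have a single red edge to a bag that already contains many vertices of $S$ with different $N_j$-neighborhoods. In your stage one those $S$-vertices are still singletons, so the mirrored vertex $u$ of $N_j$ has one red edge to each of them, and its red degree can blow up to $\Theta(p)$ independently of $t$. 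The paper's proof avoids exactly this by inserting $H$ only at the $C'$-safe trigraph (Definition~\ref{def:critical}), which is \emph{before} the first red edge to $H$ would appear, so that Observation~\ref{obs:subset} guarantees the bags of descendants of $S^H$ respect $\sim_H$, and the injection argument in Claim~\ref{cl:Cj} of Lemma~\ref{lem:one_new_H} can charge each red edge of a descendant of $H$ to a red edge of the corresponding descendant of $H'$ in $C'$ \emph{at the same stage}.

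A second, related omission is in your fusion step. When you ``fuse $N_j$ into the bag of $R_{\pi(j)}$,'' contracting the reduced $N_j$ vertex-by-vertex with the reduced representative can turn black edges inside $R_{\pi(j)}$ red, and there is no mechanism in your sketch to absorb those. The paper needs \emph{two} already-merged copies $H', H''$ of $[H]_\sim$ at the insertion point: because $H'$ is merged with $H''$, the subtrigraph on $H'$'s descendants is entirely red, so contracting $H$ into it creates no new red edges there (end of the proof of Lemma~\ref{lem:one_new_H}). This is precisely what the pigeonhole argument with $f(p) = 2^{7p^3}$ is used for — not to show that ``$C'$ treats representatives uniformly'' in general, but to show that at the critical moment two copies are already merged (Lemma~\ref{lem:merged_copies}). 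Finally, the paper must add the missing $H$'s in non-increasing order of their safe indices (Claim~\ref{cl:ind_update}), because inserting at a later point could invalidate the safe points of the remaining components; your proposal to distribute $N_j$'s evenly across representatives does not address this ordering constraint. So the high-level intuition (canonical isomorphism, charging extra red edges against $C'$, exploiting the large threshold) is right, but the two stages as set up cannot be made to work without relocating the insertion to the safe trigraph and exploiting the merged pair.
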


Let us now show how we can use this lemma to design the desired algorithm:

\begin{theorem}\label{thm:param-by-VI}
If $G$ is a graph with vertex integrity $p$, then a contraction sequence for $G$ of width at most $2\cdot \tww(G)$ can be computed in time $g(p) \cdot n^{\ca O(1)}$, where $g$ is an elementary function.
\end{theorem}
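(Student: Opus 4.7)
The plan is to assemble the components developed earlier in this section into a single FPT algorithm. First, compute a witness set $S$ for the vertex integrity of $G$ together with the component partition $\con$ of $G - S$; the preliminaries recall an algorithm running in time $\ca O(p^{p+1}\cdot n)$ for this task. Since the twin-width of a disconnected graph is the maximum over its components, we may also assume $G$ is connected.

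Next, construct the reduced graph $G'$ of $G$ as in Definition~\ref{def:G'}. For this, I would bucket the components in $\con$ into equivalence classes of $\sim$: checking whether two components $H_0, H_1 \in \con$ are twin-blocks amounts to searching for a canonical isomorphism compatible with their adjacency patterns towards $S$, which can be done in time polynomial in $|\con|$ and in $p$ by brute-forcing the at most $p!$ vertex permutations inside each component (each such component has at most $p$ vertices). We then keep $f(p) = 2^{7p^3}$ arbitrary representatives of each large class and discard the rest, obtaining $G'$ in polynomial time; by Observation~\ref{obs:G'-size}, $|V(G')| \in 2^{\ca O(p^3)}$.

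We now invoke Observation~\ref{obs:brute-force} on $G'$ to compute an optimal contraction sequence $C'$ in time $2^{\ca O(|V(G')|\cdot \log|V(G')|)}$, which remains elementary in $p$. Because $G'$ is an induced subgraph of $G$ and twin-width is monotone under induced subgraphs -- the restriction of any contraction sequence of $G$ to $V(G')$ in the sense of Definition~\ref{def:restriction} has width at most the width of the original -- we obtain $w(C') = \tww(G') \le \tww(G)$.

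Finally, feeding $C'$ into Lemma~\ref{lem:CfromC'} yields, in polynomial time, a contraction sequence $C$ of $G$ of width at most $2 \cdot w(C') \le 2 \cdot \tww(G)$, as required. The runtime is dominated by the brute-force step and has the promised form $g(p) \cdot n^{\ca O(1)}$ for an elementary function $g$. The conceptually hard work is entirely encapsulated in Lemma~\ref{lem:CfromC'}, the workhorse of this section; granting it, the theorem follows by a straightforward composition of the decomposition algorithm, the reduction to $G'$, the brute-force solver on the small instance $G'$, and the lift back to $G$, so I do not anticipate any real obstacle in the present argument beyond careful bookkeeping.
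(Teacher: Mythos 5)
Your proposal is correct and follows essentially the same approach as the paper: compute the decomposition witness, build the reduced graph $G'$, brute-force an optimal contraction sequence $C'$ of $G'$ via Observation~\ref{obs:brute-force}, and lift it to $G$ via Lemma~\ref{lem:CfromC'}, using monotonicity of twin-width under induced subgraphs to bound $w(C') \le \tww(G)$. The only additions you make beyond the paper's own writeup are minor bookkeeping details (how to compute the equivalence classes of $\sim$, and an explicit justification of monotonicity via restriction), which do not change the structure of the argument.
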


\begin{proof}
The first step of the algorithm is to compute an optimal vertex-integrity decomposition of $G$. As noted already in Section~\ref{sec:prelims}, this can be done in time $\bigoh(p^{p+1}\cdot n)$~\cite{DrangeDH16}. Using this decomposition, we can compute the reduced graph $G'$ of $G$ in linear time. Next, we can compute an optimal contraction sequence $C'$ of $G'$, using Observation~\ref{obs:brute-force}. Since the size of $G'$ is bounded (see Observation~\ref{obs:G'-size}), we deduce that computing $C'$ takes time $g(p) \in \exp(\exp(\ca O(p^3)))$, where $\exp(x) = 2^x$. 

Finally, we apply Lemma~\ref{lem:CfromC'} to compute in polynomial time a contraction sequence $C$ for $G$ of width at most $2\cdot w(C') = 2\cdot \tww(G')$. Since $G'$ is an induced subgraph of $G$, we know $\tww(G') \le \tww(G)$, which implies the desired bound $w(C) \le 2\cdot \tww(G)$.
\end{proof}

\subsection{Extending a contraction sequence from $G'$ to $G$}\label{sub:sec5}

This subsection is dedicated to proving Lemma~\ref{lem:CfromC'}. Recall that we have fixed a graph $G$ and a set $S \seq V(G)$, and that $\con$ is the set of connected components of $G-S$.
Let us begin with several technical definitions.

\begin{definition}\label{def:merged}
Let $G'$ be a $\con$-respecting graph, let $H_0$ and $H_1$ be distinct twin-blocks (with canonical isomorphism $\alpha$) such that $H_0, H_1 \seq G'$, and let $G^*$ be any trigraph obtained from $G'$ by a sequence of contractions. We say that $H_0$ and $H_1$ are \emph{merged} in $G^*$ if, for each $u\in V(H_0)$, there is a vertex $v \in V(G^*)$ such that $u, \alpha(u) \in \beta(v)$.
\end{definition}

It might be confusing that in the following definition, we consider a $\con$-respecting graph and a graph $H \in \con$ that is \emph{not} its induced subgraph. The reason for this is that later we will show that, under some conditions, $H$ can be ``added'' without increasing the twin-width too much. In fact, all such graphs $H$ will be progressively added until all of them are present (and the obtained graph is the whole $G$). To formalize the process of adding $H$, we will use Definition~\ref{def:ext} to create an extension of a contraction sequence to a sequence with $H$ ``appended'' to all trigraphs.

\begin{definition}\label{def:critical}
Let $G'$ be a $\con$-respecting graph, let $H \in \con$ be such that $H \nsubseteq G'$, let $C' = (G_1', G_2', \ldots)$ be a contraction sequence of $G'$. 
\begin{itemize}
\item We say that a trigraph $G_i'$ in $C'$ is the \emph{$C'$-critical trigraph} for $H$ if $i$ is the least index such that some vertex of $H$ has a red neighbor in $G_i' \uparrow G$.
\item If $G_i'$ is the $C'$-critical trigraph for $H$, then we say that a trigraph $G_j'$ is \emph{$C'$-safe} for $H$ if $j < i$ and there are two graphs $H', H'' \in [H]_\sim$ that are merged in $G_j'$.
\end{itemize}
\end{definition}

We will show that for each $H$ and $C'$ (as in Definition~\ref{def:critical}), there is a $C'$-safe trigraph for $H$. The first step towards this is to show that if $H$ has many twin-blocks in $G'$, then there are two twin-blocks of $H$ merged in the $C'$-critical trigraph $G^*$ for $H$.
Intuitively, if the twin-blocks of $H$ were not ``sufficiently-merged'' in $G^*$, then some vertex of $S$ would have high red degree because the existence of a red edge between $S$ and $H$ (see the definition of $C'$-critical) implies red edges between $S$ and all twin-blocks of $H$.

\begin{lemma}\label{lem:merged_copies}
If $G'$ is a $\con$-respecting graph, $C'$ is a contraction sequence of $G'$, $H \in \con$ is a graph such that $H \nsubseteq G'$, the class $\ca H := [H]_\sim$ is large in $G'$, and $G^*$ is the $C'$-critical trigraph for $H$, then there are two graphs $H', H'' \in [H]_\sim$ that are merged in $G_i'$.
\end{lemma}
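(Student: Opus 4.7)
Our plan is to argue by contradiction: assume no two twin-blocks of $\ca H$ are merged in $G^*$, and derive that $|\ca H| < f(p) = 2^{7p^3}$, contradicting the assumption that the class is large in $G'$. The first step is to extract structural information from the criticality of $G^* = G_i'$. By the minimality defining $i$, no vertex of $V(H)$ has a red neighbor in $G_{i-1}' \uparrow G$. Unfolding the red-edge creation rule of contractions yields the following key invariant in $G_{i-1}'$: every bag $b$ with $\beta(b) \cap N_S(V(H)) \neq \emptyset$ must satisfy $\beta(b) \seq S$, where $N_S(U) := S \cap N_G(U)$, since otherwise some $u_0 \in V(H)$ would see a red neighbor through the edge contributed by the bag intersecting $N_S(u_0)$. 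Tracing the contraction that produces the red edge at step $i$ further shows that the critical contraction merges two bags $\beta(a)$ and $\beta(b)$ such that, up to symmetry, $\emptyset \neq \beta(a) \seq N_S(u) \seq S$ and $\beta(b) \cap N_S(u) = \emptyset$, where $u \in V(H)$ is the vertex first acquiring a red neighbor.

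Next, for each twin-block $H' \in \ca H$ we define a signature $\sigma(H')$ summarizing, for every $v \in V(H_0)$, the ``shape'' of the bag $\phi_{H'}(v) \in V(G^*)$ containing the image $\alpha_{H_0 \to H'}(v)$ (where $H_0$ is a fixed canonical representative of the equivalence class). Concretely, the signature records: (i) the partition of $V(H_0)$ induced by the relation $v \sim v' \iff \phi_{H'}(v) = \phi_{H'}(v')$, (ii) the intersection $\beta(\phi_{H'}(v)) \cap S$, and (iii) a flag indicating whether $\beta(\phi_{H'}(v)) \seq V(H')$ (``private'') or extends beyond. Using the standard bounds $B_p \leq 2^{p \log p}$ for partitions, $(2^p)^p = 2^{p^2}$ for $S$-subset labelings, and $2^p$ for private-flag patterns, the number of signatures is at most $2^{\bigoh(p^2)}$. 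Since $|\ca H| \geq 2^{7p^3}$, the pigeonhole principle yields a subclass $\Gamma \seq \ca H$ of twin-blocks sharing one signature with $|\Gamma|$ still exponentially large in $p$.

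The final step is to argue that two twin-blocks $H', H'' \in \Gamma$ must actually be merged. For any $v$ whose bag intersects $S$, we get $\phi_{H'}(v) = \phi_{H''}(v)$, because each $s \in S$ lies in a unique bag of $G^*$ and so the shared $S$-content forces the bag to be the same. For any $v$ flagged as private, distinct twin-blocks have disjoint private bags, and so $H'$ and $H''$ cannot agree on $v$; this case must therefore be eliminated by a secondary pigeonhole (a further refinement of the signature having no private entries anywhere). The remaining ``extending non-$S$'' bags---those that contain a vertex of $V(H')$ together with at least one vertex outside $V(H') \cup S$---form the main technical obstacle, since the signature alone does not uniquely identify such a bag. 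Here we plan to leverage the structural invariant obtained in the first step, which constrains how bags near $N_S(V(H))$ can evolve across the critical contraction, together with the large gap between $2^{7p^3}$ and $2^{\bigoh(p^2)}$, to carry out a multi-level refinement that aligns extending bags across two twin-blocks in a sufficiently small sub-subclass of $\Gamma$.

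The main obstacle is precisely the ``extending non-$S$'' case: two twin-blocks sharing a signature can a priori extend into disjoint collections of other twin-blocks or other components, so a direct signature comparison does not immediately suffice. The specific choice $f(p) = 2^{7p^3}$ is calibrated to leave enough pigeonhole headroom for these successive refinements, in combination with the structural control obtained from analyzing the bags of $G_{i-1}'$ whose bag meets $N_S(V(H))$.
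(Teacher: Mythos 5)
Your proposal does not actually constitute a proof: you explicitly concede that the ``extending non-$S$'' case is ``the main technical obstacle'' and only gesture at a ``multi-level refinement'' without carrying it out. That case is not a loose end---it is the entire content of the lemma. When a bag of $G^*$ contains vertices of a twin-block $H'$ together with vertices of \emph{other} twin-blocks (which is exactly the merged situation the lemma asserts exists), your signature cannot identify which bag that is, because the identity of the co-inhabitants is lost. The pigeonhole headroom of $2^{7p^3}$ versus $2^{\bigoh(p^2)}$ signatures does not by itself resolve this: two twin-blocks $H', H'' \in \Gamma$ with identical signatures can a priori have every one of their ``extending'' images sitting in pairwise distinct bags, so that $H'$ and $H''$ are not merged with each other even though each is merged with \emph{something}. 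You would need a mechanism that forces enough of these bags to coincide, and you have not supplied one.

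The paper's proof takes a structurally different route that avoids this difficulty entirely. Instead of classifying bags by a static signature and trying to match them up, it fixes a vertex $u \in V(H)$ that first acquires a red neighbor $v$ (a descendant of $S$) in $G^* \uparrow G$, and propagates ``bag-agreement'' outward from $u$ through $H$ by a distance-based induction. The base case uses the bound $\tww(G') \le 2^{p+1}$ (from treewidth $\le$ vertex integrity) to say $v$ has at most $d := 2^{p+1}+1$ red neighbors in $G^*$, so the $f(p)$ images of $u$ fall into at most $d$ bags, and one bag captures at least $f(p)/d$ of the indices. The inductive step pushes this one BFS-layer further in $H$: for each vertex $w$ at distance $a+1$ from $u$, its neighbor $x$ at distance $a$ already has all its images in one bag $x'$, and $x'$ also has red degree at most $d-1$, so again the images of $w$ collapse into at most $d$ bags; partitioning $I_a$ over the at most $p$ such vertices loses a factor $d^p$. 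After at most $p-1$ steps (the diameter of $H$) there remain at least two indices whose copies agree on every bag, i.e., two merged twin-blocks. The choice $f(p) = 2^{7p^3}$ is calibrated to survive the $d^{p(p-1)+1}$ loss. This induction is precisely what replaces the ``multi-level refinement'' you left unspecified, and it works because the red-degree bound on descendants of $S$ controls bag-collisions locally at each layer, whereas a global signature does not see how bags evolve as you move through $H$.
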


\begin{proof}
Let $I = [f(p)]$ and let $H_1, \ldots, H_{f(p)} \in \ca H$ be distinct graphs such that $H_i \seq G'$ for each $i \in I$ (using the fact that $\ca H$ is large in $G'$). For $i \in I$ and $u \in V(H)$, let $u_i := \alpha(u)$, where $\alpha \colon V(H) \rightarrow V(H_i)$ is a canonical isomorphism. 
Let $u, v \in V(G^* \uparrow G)$ be two vertices such that $u \in V(H)$ and $uv$ is a red edge in $G^* \uparrow G$. By Definition~\ref{def:critical}, such vertices $u$ and $v$ exist, and by definition of vertex integrity, $v$ is a descendant of $S$. Let $d := 2^{p+1} + 1$.
We shall prove by induction that the following claim holds. 

\begin{claim}\label{cl:merge}
For each $a \in [0, p-1]$, there is a set $I_a \seq I$ of size at least $f(p) / d^{pa+1}$ such that for each $i,j \in I_a$ and each vertex $w \in V(H)$ at distance at most $a$ from $u$ in $H$, there is a vertex $x \in V(G^*)$ such that $w_i, w_j \in \beta(x)$.
\end{claim}

Observe that this statement implies that $H_i$ and $H_j$ for any $i,j \in I_{p-1}$ are merged in $G^*$ because the diameter of $H$ is at most $p - 1$.

\begin{claimproof}[Proof of Claim~\ref{cl:merge}]
Let us start by proving Claim~\ref{cl:merge} for $a = 0$.
Let $U = \{u_i \sep i \in I\}$ and observe that for each $i \in I$, the descendant $u_i'$ of $u_i$ is a red neighbor of $v$ in $G^*$, by Definition~\ref{def:equivalence} (unless $u_i' = v$). However, the red degree of $v$ in $G^*$ is at most $\tww(G') \le 2^{p+1}$ (because the treewidth of $G'$ is at most the vertex integrity of $G'$, and the twin-width is bounded by treewidth, see \cite{DBLP:conf/wg/JacobP22}). Hence, the vertices of $U$ are present in the bags of at most $d$ vertices in $G^*$ (note that some vertices of $U$ may be in the bag of $v$), which means that there is a vertex $w \in V(G^*)$ with at least $f(p) / d$ vertices of $U$ in its bag.
Now it suffices to set $I_0 := \{i \in I \sep u_i \in \beta(w)\}$.
This concludes the proof of the base case of the induction.

For the induction step, suppose that Claim~\ref{cl:merge} holds for some $a \in [0, p-2]$, i.e., there is a set $I_a \seq I$ with the described properties.
Let $D_a, D_{a+1} \seq V(H)$ be the sets of vertices at distance exactly $a$ or $a+1$ from $u$ in $H$, respectively.
Let $w \in D_{a+1}$ and $x \in D_a$ be two neighbors in $H$.
Let $x'$ be the descendant of $x_i$ in $G^*$ for some $i \in I_a$ (or, equivalently, for each $i \in I_a$, by the induction hypothesis), let $W = \{w_i \sep i \in I_a\}$, and let $w_i'$ be the descendant of $w_i$ in $G^*$ (for any $i \in I_a$). Observe that $x'w_i'$ is a red edge of $G^+$, unless $x' = w_i'$. Using the same argument as in the base case, $x'$ has red degree at most $d-1$ in $G^*$, which means that the vertices of $W$ are present in the bags of at most $d$ vertices in $G^*$.

Since $|D_{a+1}| \le p$, $I_a$ can be partitioned into at most $d^p$ parts such that if $i, j \in I_a$ are in the same part, then for each vertex $w \in D_{a+1}$, $w_i$ and $w_j$ are in the bag of the same vertex in $G^*$. Hence, one of these parts has size at least $|I_a| / d^p$, and we choose it to be $I_{a+1}$. A simple computation shows that $I_{a+1}$ satisfies Claim~\ref{cl:merge}.
\end{claimproof}

Finally, we only need to verify that $|I_{p-1}| \ge f(p) / d^{p(p-1) +1} \ge 2$. Recall that $f(p) = 2^{7p^3}$ and $d = 2^{p+1} +1 \le 2^{3p}$ since $p \ge 1$. Since $p(p-1) + 1 \le 2p^2$, we get $|I_{p-1}| \ge 2^{7p^3}/2^{6p^3} \ge 2$, which concludes the proof.
\end{proof}

Now we need to take a closer look at $S$.

\begin{definition}\label{def:H-equiv}
Let $H \in \con$ and $u, v \in S$. We say that $u$ and $v$ are \emph{$H$-equivalent}, denoted $u \sim_H v$, if and only if for each $w \in V(H)$, $uw \in E(G) \Leftrightarrow vw \in E(G)$. Let $S^H \seq S$ be the set of vertices with at least one neighbor in $H$ (in $G$). If $G_i'$ is a trigraph in a contraction sequence of a $\con$-respecting graph, then we denote by $S_i^H$ the set of descendants of $S^H$ in $G_i'$.
\end{definition}

A crucial observation is that before the $C'$-critical trigraph for $H$, only very restricted contractions may involve vertices of $S^H$ (so that a red edge to $H$ does not appear).

\begin{observation}\label{obs:subset}
If $G'$ is a $\con$-respecting graph, $C'= (G_1', G_2', \ldots)$ is a contraction sequence of $G'$, $H \in \con$ is a graph such that $H \nsubseteq G'$, $G_i'$ is the $C'$-critical trigraph for $H$, and $j < i$, then for each $u \in S_j^H$, the bag $\beta(u)$ is a subset of an equivalence class of $\sim_H$.
\end{observation}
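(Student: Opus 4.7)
The plan is to exploit the minimality built into the definition of the $C'$-critical trigraph $G_i'$: for any $j < i$, no vertex of $H$ has a red neighbor in the extended trigraph $G_j' \uparrow G$. Since $V(H) \subseteq V(G) \setminus V(G')$ and the extension $C$ of $C'$ to $G$ performs no contraction involving vertices outside $V(G')$, every $w \in V(H)$ sits in $G_j' \uparrow G$ with singleton bag $\{w\}$. From the definition of contraction, it follows that the edge between a vertex $u \in V(G_j')$ and $w$ in $G_j' \uparrow G$ is red precisely when $\beta(u)$ has mixed adjacency to $w$ in the original graph $G$; it is black (resp.\ absent) when $w$ is uniformly adjacent (resp.\ non-adjacent) to all of $\beta(u)$. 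This correspondence is the workhorse of the proof.

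Now fix $u \in S_j^H$ and an arbitrary $w \in V(H)$. Combining the observation above with the absence of red neighbors of $H$ in $G_j' \uparrow G$ forces the dichotomy that either every vertex of $\beta(u)$ is adjacent to $w$ in $G$, or none is. Quantifying over all $w \in V(H)$, any two elements of $\beta(u)$ share the exact same neighborhood inside $V(H)$, which is precisely the defining condition of the relation $\sim_H$, provided that $\beta(u) \subseteq S$.

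To close the argument, I still have to certify $\beta(u) \subseteq S$. Because $u$ is a descendant of $S^H$, there exists some $s \in \beta(u) \cap S^H$, and by definition of $S^H$ this $s$ has at least one neighbor $w^* \in V(H)$ in $G$. The dichotomy from the previous paragraph, applied with $w = w^*$, then forces every element of $\beta(u)$ to be adjacent to $w^*$. Since $H$ is a connected component of $G - S$, the $G$-neighborhood of $w^*$ is contained in $V(H) \cup S$; but $G'$ is $\con$-respecting with $H \nsubseteq G'$, so $V(H)$ is disjoint from $V(G') \supseteq \beta(u)$. Hence $\beta(u) \subseteq S$, and we conclude that $\beta(u)$ is a subset of a single $\sim_H$-class.

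I do not anticipate a substantial obstacle here. The only subtle point is translating the formal definition of red edges through the extension operation $G_j'\uparrow G$ into a clean statement about the original adjacencies in $G$; once that translation is pinned down, both the $\sim_H$-equivalence within $\beta(u)$ and the containment $\beta(u) \subseteq S$ follow directly from the minimality of the index $i$.
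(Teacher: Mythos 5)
Your proof is correct and takes essentially the same approach as the paper: both arguments rest on the observation that, because $j<i$, no red edge can be incident to $H$ in $G_j'\uparrow G$, and mixed adjacency of $\beta(u)$ to some $w\in V(H)$ would create exactly such a red edge. The only cosmetic difference is the order of steps (you establish the uniform-adjacency dichotomy once and invoke it both for $\beta(u)\subseteq S$ and for $\sim_H$-membership, whereas the paper handles the case $\beta(u)\nsubseteq S$ separately via the fact that a non-$S$ vertex has no neighbor in $H$).
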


\begin{proof}
Suppose for contradiction that there is $u \in S_j^H$ such that $\beta(u)$ is not a subset of an equivalence class of $\sim_H$. If $\beta(u) \nsubseteq S$, then clearly all neighbors of $u$ in $H$ (in $G_j' \uparrow G$) would be red, a contradiction with $j < i$ and the choice of $i$. Hence, assume $\beta(u) \seq S$. If there are $v_0, v_1 \in \beta(u)$ such that $v_0 \nsim_H v_1$, then there is a vertex $w \in H$ that has exactly one neighbor in $\{v_0, v_1\}$ in $G$, by Definition~\ref{def:H-equiv}. Thus, $uw$ is a red edge in $G_j' \uparrow G$, again a contradiction
\end{proof}

Using Observation~\ref{obs:subset}, we can prove the existence of a $C'$-safe trigraph.

\begin{lemma}\label{lem:safe-exists}
If $G'$, $C'$, $H$ and $G_i'$ are as in Observation~\ref{obs:subset} and the equivalence class $[H]_\sim$ is large in $G'$, then $G_{i-1}'$ is a $C'$-safe trigraph for $H$.
\end{lemma}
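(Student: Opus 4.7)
The plan is to apply Lemma~\ref{lem:merged_copies} to the critical trigraph $G_i'$ and then argue that the single contraction producing $G_i'$ from $G_{i-1}'$ cannot, by itself, be responsible for merging a pair of twin-blocks, provided we extract enough merged copies from that lemma. The key observation is that the proof of Lemma~\ref{lem:merged_copies} in fact yields more than its stated conclusion: it produces a set $I_{p-1} \seq [f(p)]$ of size at least $2^{\Omega(p^3)}$ such that for \emph{every} $w \in V(H)$, all copies $\{w_a : a \in I_{p-1}\}$ lie in one common bag $v_w$ of $G_i'$. In particular, the family $\{H_a : a \in I_{p-1}\}$ is pairwise merged in $G_i'$ via a single consistent choice of merging bags.

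First I would observe that $i \ge 2$: indeed, $G_1' = G'$ contains no red edges, so $G_1' \uparrow G = G$ cannot witness any red neighbor of $H$, and hence $G_{i-1}'$ is well-defined. Let the unique contraction producing $G_i'$ from $G_{i-1}'$ merge two vertices $x, y \in V(G_{i-1}')$ into $z \in V(G_i')$. Every bag of $G_{i-1}'$ is also a bag of $G_i'$ except that $\beta(z) = \beta(x) \cup \beta(y)$. Consequently, the only merging bags $v_w$ that can behave differently in the two trigraphs are those among $W^* := \{w \in V(H) : v_w = z\}$, a set of cardinality at most $|V(H)| \le p$.

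The second step is a pigeonhole restriction over $W^*$. Iterating through the vertices $w \in W^*$ in turn, I restrict the current index set to whichever side of the partition $\{w_a : a \in I_{p-1}\} \cap \beta(x)$ versus $\{w_a : a \in I_{p-1}\} \cap \beta(y)$ is larger, losing at most a factor of $2$ per step. The resulting set $I' \seq I_{p-1}$ has size at least $|I_{p-1}|/2^p$, which by the size bound on $|I_{p-1}|$ still has at least two elements, and it has the property that for every $w \in W^*$ all copies $\{w_a : a \in I'\}$ lie on one fixed side of the partition.

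Finally, pick any distinct $a, b \in I'$. For every $w \in V(H)$ the copies $w_a$ and $w_b$ share a bag in $G_{i-1}'$: if $v_w \neq z$ then $v_w$ exists already in $G_{i-1}'$ with the same contents, and if $v_w = z$ then $w \in W^*$, so by the choice of $I'$ either both $w_a$ and $w_b$ lie in $\beta(x)$ or both lie in $\beta(y)$. Hence $H_a$ and $H_b$ are merged in $G_{i-1}'$, and since $i-1 < i$ this establishes that $G_{i-1}'$ is $C'$-safe for $H$. The main subtlety I expect to treat carefully is the strengthened conclusion of Lemma~\ref{lem:merged_copies} — its induction does produce a single common bag per $w$ rather than merely pairwise merged copies, but this has to be made explicit — together with verifying that the choice $f(p) = 2^{7p^3}$ comfortably absorbs both the loss accumulated in that lemma's induction and the additional factor $2^p$ lost to the new pigeonhole over $W^*$.
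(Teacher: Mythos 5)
Your proof is correct, but it takes a genuinely different route from the paper's, and a more elaborate one. The paper's own argument is much shorter: it uses Lemma~\ref{lem:merged_copies} only as a black box to get one merged pair $H',H''$ in $G_i'$, and then invokes Observation~\ref{obs:subset}. Concretely, the paper argues that since $G_i'$ is $C'$-critical, the contraction producing $G_i'$ must involve a descendant of $S^H$ (that is the only way a new red edge incident to $H$ can appear in $G_i'\uparrow G$); but by Observation~\ref{obs:subset} every descendant of $S^H$ in $G_{i-1}'$ has a bag contained entirely in $S$, so it cannot also be a descendant of $H'\cup H''$. Hence the contraction cannot be the one that first merges $H'$ and $H''$, and they must already be merged in $G_{i-1}'$. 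Your argument avoids Observation~\ref{obs:subset} entirely, at the cost of opening up the proof of Lemma~\ref{lem:merged_copies} to extract the (correct but unstated) fact that $I_{p-1}$ gives many copies of $H$ sharing common bags pointwise, and then paying an extra factor of $2^p$ in a pigeonhole over $W^*$. This works because $f(p)=2^{7p^3}$ has ample slack, and your accounting is sound (the one place to be careful is tiny $p$, but there the original bound $f(p)/d^{p(p-1)+1}$ is still large enough, and in any case the statement is vacuous for $p\leq 2$). The trade-off: the paper's proof is shorter and modular, reusing Observation~\ref{obs:subset} which is needed elsewhere anyway; your proof is self-contained combinatorics but requires strengthening the cited lemma rather than using it as stated.
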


\begin{proof}
By Definition~\ref{def:critical}, it suffices to show that there are two graphs $H', H'' \in [H]_\sim$ that are merged in $G_{i-1}'$.
By Lemma~\ref{lem:merged_copies}, we know that such merged graphs $H'$ and $H''$ exist for $G'_i$.
Let $u,v \in V(G_{i-1}')$ be the two vertices that are contracted in $G_i'$, and suppose for contradiction that $H'$ and $H''$ are not merged already in $G_{i-1}'$.
This implies that $u$ and $v$ are both descendants of $H' \cup H''$.
However, by Observation~\ref{obs:subset}, $u, v \notin S_{i-1}^H$.
This is a contradiction with Definition~\ref{def:critical} because the contraction creating $G_i'$ must involve a vertex of $S^H$ so that a red edge incident to $H$ can appear in $G_i' \uparrow G$.
\end{proof}

Now we are ready to show how a contraction sequence $C'$ of $G'$ can be modified when a graph $H \in \con $ is added to $G'$. Unfortunately, we cannot do that without increasing the width. Since our goal is to eventually add many graphs $H \in \con$, we need to keep the increase under control, for which we use the following definition.

\begin{definition}\label{def:progress}
A contraction sequence $C = (G_1,\ldots, G_n)$ has \emph{progressive width} $(a \rightarrow_i b)$ if the width of $(G_1, \ldots, G_{i-1})$ is at most $a$ and the width of $(G_{i}, \ldots, G_n)$ is at most $b$.
\end{definition}

\begin{lemma}\label{lem:one_new_H}
Let $G'$ be a $\con$-respecting graph, let $C' = (G_1', G_2',\ldots)$ be a contraction sequence of $G'$, let $H \in \con$, $H \nsubseteq G'$ be such that $\ca H := [H]_\sim$ is large in $G'$, let $G^+ = G[V(G') \cup V(H)]$, and let $G'_{i-1}$ be a $C'$-safe trigraph for $H$. If $C'$ has progressive width $(t \rightarrow_{i} 2t)$, then we can construct in polynomial time a contraction sequence $C^+$ for $G^+$ of progressive width $(t \rightarrow_i 2t)$.
Moreover, if $j < i$, then $G_j' \uparrow G^+$ is the $j$-th trigraph in $C^+$.
\end{lemma}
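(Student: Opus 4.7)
The plan is to build $C^+$ in three phases. \textbf{Phase 1:} For $j \in [i-1]$, set the $j$-th trigraph of $C^+$ to be $G_j' \uparrow G^+$; this directly establishes the ``Moreover'' clause. Because $G_{i-1}'$ is $C'$-safe, the $C'$-critical trigraph occurs at step $i$ or later, so by Definition~\ref{def:critical} no vertex of $H$ has a red neighbor in $G_j' \uparrow G$ for $j < i$. Hence each Phase 1 trigraph has the same red edges as $G_j'$ and width at most $t$. \textbf{Phase 2:} Using $C'$-safety, fix twin-blocks $H', H'' \in [H]_\sim$ merged in $G_{i-1}'$ with canonical isomorphisms $\alpha' \colon H \to H'$ and $\alpha'' \colon H \to H''$, and for each $w \in V(H)$ let $x_w$ denote the unique vertex of $G_{i-1}'$ whose bag contains both $\alpha'(w)$ and $\alpha''(w)$. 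Phase 2 performs $|V(H)|$ contractions, one per $w$, each of the form ``contract $w$ into $x_w$'' (in a carefully chosen order discussed below), producing $y_w$ with $\beta(y_w) = \beta(x_w) \cup \{w\}$. \textbf{Phase 3:} Apply the remaining contractions of $C'$ to the corresponding $H$-augmented vertices. Together the three phases perform $|V(G^+)| - 1$ contractions and hence form a valid contraction sequence for $G^+$.

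The key structural observation driving the $2t$ bound for Phase 3 (and for the trigraph at the end of Phase 2) is that $H, H', H''$ are pairwise twin-blocks with identical $S$-neighborhoods. Consequently, adjoining $w$ to $\beta(x_w)$ preserves the edge type of $y_w$ to every other vertex $z$: $w$'s only potential neighbors in $V(G')$ lie in $S$, and $w$ shares its $S$-adjacency with $\alpha'(w) \in \beta(x_w)$; a short case analysis across the black, red, and non-edge cases confirms this invariance. It follows that the trigraph produced at the end of Phase 2 is edge-type-isomorphic to $G_{i-1}'$, so applying $C'$'s remaining contractions yields trigraphs analogously isomorphic to $G_i', G_{i+1}', \ldots$, each of width at most $2t$.

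The main obstacle is bounding the widths of the intermediate trigraphs \emph{during} Phase 2. When we contract $w$ into $x_w$, the new vertex $y_w$ acquires a red edge to every unabsorbed $H$-neighbor $w'$ of $w$: $w'$ has a black edge to $w$, while by the $C'$-critical property $w'$ has no edge at all to $x_w$ (it cannot be red by criticality, and it cannot be black because $\alpha'(w) \in V(H') \cap \beta(x_w)$ is non-adjacent to $w' \in V(H)$). In the worst case this raises $y_w$'s red degree to $r_w + |N_H(w)|$, where $r_w \le t$ is the red degree of $x_w$ in $G_{i-1}'$; since $|N_H(w)|$ is not a priori bounded by $t$, a direct bookkeeping does not suffice. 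To secure the $2t$ bound the plan is to schedule the absorptions in a carefully chosen order and exploit the fact that each transient red edge $y_w\text{--}w'$ coalesces with the already-present red edge $y_w\text{--}x_{w'}$ (inherited from $x_w$'s red neighborhood in $G_{i-1}'$) as soon as $w'$ is absorbed. Turning this amortization into a rigorous $2t$ bound on every intermediate trigraph is the technical crux of the proof.
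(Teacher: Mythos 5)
Your Phase~1 and Phase~3 match the paper's construction, but Phase~2 takes a fundamentally different route, and it is exactly there that your argument has an unresolved gap---one that you explicitly flag yourself (``Turning this amortization into a rigorous $2t$ bound on every intermediate trigraph is the technical crux of the proof''). The proposal therefore does not constitute a complete proof.

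More concretely, the amortization you describe only works when the transient red edge $y_w\text{--}w'$ can be ``paid for'' by an already-present red edge $y_w\text{--}x_{w'}$. That red edge exists when $x_w \ne x_{w'}$, because then $ww' \in E(H)$ forces $x_w x_{w'}$ to be red in $G_{i-1}'$. But when $x_w = x_{w'}$ (i.e., two adjacent vertices of $H$ map to the same merged vertex of $G_{i-1}'$), there is no such inherited red edge, so the transient edge to $w'$ is ``free'' from the perspective of $x_w$'s red budget. In that situation the number of unabsorbed $H$-neighbours of $w$ can strictly exceed the red degree of $x_w$ in $G_{i-1}'$, and then the red degree of $y_w$ immediately after the first absorption can exceed $2t$ regardless of the absorption order, since the first vertex absorbed always sees all of its $H$-neighbours as red. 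Your note that ``$|N_H(w)|$ is not a priori bounded by $t$'' is pointing at this; the problem is that the coalescing idea only closes the gap in the case where the $x_w$ are pairwise distinct for adjacent $w$'s, which is precisely the case where no amortization was needed in the first place.

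The paper avoids this difficulty by decoupling your Phase~2 into two sub-phases. First, it contracts $H$ \emph{internally}, following the restriction of $C'_{<i}$ to $H'$ (transported to $H$ via the canonical isomorphism), so that the descendants of $H$ in the intermediate trigraph always mirror the state of $H'$ at the corresponding earlier time in $C'$. This means each $H$-descendant inherits a red degree bound of $t$ from its $H'$-counterpart, and vertices of $S^H$ have red degree at most $2t$ (their ``old'' red neighbours plus a copy of their red neighbours inside $H'$). Only then, once the contracted $H$ is isomorphic to $H'_{i-1}$, does the paper merge the two piece by piece via a bijection; the merging step is safe because the paired vertices have identical black/red/non-edge types to the outside. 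If you want to salvage your approach, you should adopt this intermediate ``mimic the contraction history of $H'$'' step rather than absorbing the uncontracted $H$ directly into the merged bags.
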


\begin{proof}
By Definition~\ref{def:critical}, there are $H', H'' \in \ca H$ (such that $H', H'' \seq G'$) that are merged in $G_{i-1}'$.
Let $\iota\colon H \rightarrow H'$ be a canonical isomorphism, let $C'_{<i}$ be the prefix of $C'$ of length $i-1$, and let $C_H$ be the partial contraction sequence of $H$ isomorphic to $C'_{<i}[H']$ with an isomorphism induced by $\iota$\footnote{Formally, an isomorphism from $(G = G_1,\ldots,G_n)$ to $(H = H_1, \ldots, H_n)$ induced by an isomorphism $\alpha\colon G \rightarrow H$ is a sequence of isomorphisms $\alpha_i\colon G_i \rightarrow H_i$ such that for each $i \in [n]$ and $u \in V(G_i)$, $\beta(u) = \alpha^{-1}(\beta(\alpha_i(u)))$.\\}.
Let us now construct $C^+ = (G_1^+ = G^+, G_2^+,\ldots)$; see also Figure~\ref{fig:C+} for an illustration:

\begin{enumerate}
\item $C^+_i := (G_1^+,\ldots, G_{i-1}^+)$ is the extension of $C'_{<i}$ to $G^+$, i.e., the same contractions are performed, ignoring $H$. Note that this construction shows that $G_j' \uparrow G^+$ for each $j < i$, as required.
\item $C^+_j := (G_{i-1}^+, \ldots, G_j^+)$ is the extension of $C_H$ to $G_{i-1}^+$, i.e., $C_H$ is applied to $H$, ignoring the rest of $G_{i-1}^+$.
\item Let $H_j$ and $H'_j$ be the subtrigraphs of $G_j^+$ induced by the descendants of $H$ and $H'$, respectively. 
By Definition of $C_H$, there is a bijection $\alpha_j$ from $V(H_j)$ to $V(H_j')$ that respects\footnote{By respecting $\iota$, we mean that if $u \in \beta(v)$ for $u \in V(H)$, $v \in V(H_j)$, then $\iota(u) \in \beta(\alpha_j(v))$.} $\iota$. Let $C_k^+ := (G_j^+, \ldots, G_k^+)$ be the contraction sequence that contracts $u$ and $\alpha_j(u)$ for every $u \in V(H_j)$ in arbitrary order.
\item We will prove that $G_k^+ \cong G_{i-1}'$, and we will define the rest of $C^+$ to be the suffix of $C'$ starting with $G_i'$.
\end{enumerate}

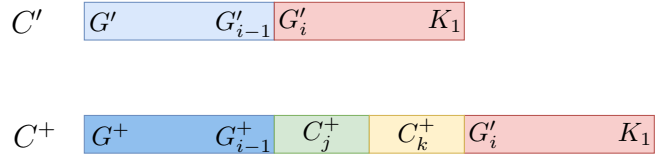
\begin{figure}[t]
\begin{tikzpicture}[y=0.5cm, x=0.5cm]
\definecolor{c6c8ebf}{RGB}{108,142,191}
\definecolor{cdae8fc}{RGB}{218,232,252}
\definecolor{cb85450}{RGB}{184,84,80}
\definecolor{cf8cecc}{RGB}{248,206,204}
\definecolor{c90bfee}{RGB}{144,191,238}
\definecolor{c82b366}{RGB}{130,179,102}
\definecolor{cd5e8d4}{RGB}{213,232,212}
\definecolor{cd6b656}{RGB}{214,182,86}
\definecolor{cfff2cc}{RGB}{255,242,204}

\clip(-2.5,-0.5) rectangle (16,4.5);
  \path[fill=white];
  
     \path[draw=c6c8ebf,fill=cdae8fc] (0.0, 4) rectangle (5, 3);
  
     \path[draw=cb85450,fill=cf8cecc] (5, 4) rectangle (10, 3);
  
     \path[draw=c6c8ebf,fill=c90bfee] (0, 1) rectangle (5, 0.);
  
     \path[draw=cb85450,fill=cf8cecc] (10, 1) rectangle (15, 0.);
  
     \path[draw=c82b366,fill=cd5e8d4] (5, 1) rectangle (7.5, 0);
  
     \path[draw=cd6b656,fill=cfff2cc] (7.5, 1.) rectangle (10, 0.);
  \node at (-1.5, 3.5) {\Large $C'$};
  \node at (-1.3, 0.5) {\Large $C^+$};
  \node at (5.5, 3.5) {$G_i'$};
  \node at (4.2, 3.4) {$G_{i-1}'$};
  \node at (0.5, 3.5) {$G'$};
  \node at (9.5, 3.5) {$K_1$};
  \node at (0.7, 0.5) {$G^+$};
  \node at (4.2, 0.4) {$G_{i-1}^+$};
  \node at (10.5, 0.5) {$G_i'$};
  \node at (14.5, 0.5) {$K_1$};
  \node at (6.25, 0.5) {$C_j^+$};
  \node at (8.75, 0.5) {$C_k^+$};
\end{tikzpicture}
\caption{A schematic depiction of the construction of $C^+$ from $C'$ in the proof of Lemma~\ref{lem:one_new_H}. Informally, we insert a new contraction segment after $G'_{i-1}$ (the green and the yellow block), which handles $H$. The blue prefixes of the two contraction sequences are ``morally'' the same but in $C^+$, $H$ is still present, and so $G_\ell'$ is not isomorphic to $G^+_\ell$ for $\ell \in [i-1]$ but it is an induced subtrigraph thereof. On the other hand, the red suffixes are exactly the same since $H$ has been contracted with $H'$. 
\label{fig:C+}}
\end{figure}

Let us argue that $C^+$ can be computed in polynomial time. First, we need to find the two merged graphs $H', H'' \in \ca H$: this can be done by brute force because the size of $\ca H$ is at most $\ca O(n)$ and checking whether given $H'$ and $H''$ are merged can be done efficiently (the details depend on the computational model and the representation of contraction sequences). Then, we compute $C_H$ by going through $C'_{<i}$ and looking only at contractions involving vertices of $H'$. Using $C_H$, it is easy to compute $C_j^+$. All other parts of $C^+$ can clearly be computed in polynomial time.

Now we need to show that $C^+$ has progressive width $(t \rightarrow_i 2t)$. By the assumption about the progressive width of $C'$, $C^+_i$ has width at most $t$ (using also the fact that $G_{i-1}'$ is a $C'$-safe trigraph for $H$; no red edge in $C^+_i$ is incident to $H$). Hence, we only need to prove that the suffix of $C^+$ starting with $G_i^+$ has width at most $2t$.
Let $S_j^H$ be the set containing the descendants $S^H$ in $G_j^+$ (or, equivalently, in $G_{i-1}'$, $G_{i-1}^+$ or $G_k^+$).

\begin{claim}\label{cl:Cj}
$C_j^+$ has width at most $2t$. Moreover, descendants of $H$ have red degree at most $t$ in trigraphs of $C_j^+$.
\end{claim}
\begin{claimproof}[Proof of the Claim]
Let $\ell \in [i, j]$, let $H_\ell$ be the subtrigraph of $G_\ell^+$ induced by the descendants of $H$, and let $m \in [i-1]$ be an index such that the subtrigraph $H'_{m}$ of $G_{m}^+$ induced by the descendants of $H'$ satisfies $|V(H_\ell)| = |V(H'_m)|$. We need to show that the red degree of each $u \in V(G^+_\ell)$ is at most $2t$ (and at most $t$ when $u \in V(H_\ell)$). By construction of $C_j^+$, there is a bijection $\alpha: V(H_\ell) \rightarrow V(H'_m)$ such that if $u \in \beta(v)$ for $u \in V(H)$, $v \in V(H_\ell)$, then $\iota(u) \in \beta(\alpha(v))$.

Let $u \in V(H_\ell)$.
We will construct a (partial) injection $\gamma: V(G^+_\ell) \rightarrow V(G^+_m)$ such that if $uv \in R(G^+_\ell)$, then $\alpha(u)\gamma(v) \in R(G^+_m)$. Since $\alpha(u)$ has red degree at most $t$ in $G^+_m$, this will prove that $u$ has red degree at most $t$ in $G^+_\ell$.
Let $v \in V(G^+_\ell)$ be a red neighbor of $u$ in $G^+_\ell$.
There are two cases to be considered:

\begin{enumerate}
\item If $v \in V(H_\ell)$, then $\alpha(u)\alpha(v) \in R(H'_m)$, using the fact that $\beta(u), \beta(v) \seq V(H)$, and we set $\gamma(v) := \alpha(v)$.
\item If $v \notin V(H_\ell)$, then $v \in S^H_j$ by construction of $C_j^+$. Let $v_0 \in \beta(v)$. By Observation~\ref{obs:subset}, $\beta(v)$ is a subset of an equivalence class of $\sim_H$. Hence, there are $u_0, u_1 \in \beta(u)$ such that $u_0v_0 \in E(G)$ but $u_1v_0 \notin E(G)$, and we let $\gamma(v)\in V(G_m^+)$ be the unique vertex such that $v_0 \in \beta(\gamma(v)) \seq \beta(v)$.
\end{enumerate}

Now we only need to show that a vertex $v \in S^H_j$ has red degree at most $2t$ in $G^+_\ell$ (no other vertex is affected by contractions among descendants of $H$). Let $K \seq V(H_\ell)$ be the set of red neighbors of $v$ in $H_\ell$ (in $G^+_\ell$). 
By Observation~\ref{obs:subset}, some (actually, each) ancestor $v_0 \in V(G^+_m)$ of $v$ has among its red neighbors all vertices of $\alpha(K)$ in $G^+_m$. Since the red degree of $v_0$ is at most $t$ in $G^+_m$ and $\alpha$ is a bijection, we obtain that $|K| \le t$. Hence, $v$ has at most $t$ red neighbors in $H_\ell$ (in $G_\ell^+$). All other red neighbors of $v$ in $G_\ell^+$ are its red neighbors also in $G_{i-1}^+$ (which has maximum red degree at most $t$), and so $v$ has indeed red degree at most $2t$ in $G^+_\ell$.
\end{claimproof}

\begin{claim}\label{cl:Ck}
$C_k^+$ has width at most $2t$.
\end{claim}
\begin{claimproof}[Proof of the Claim]
Let $\ell \in [j, k]$, let $H_\ell, H_\ell'$ be subtrigraphs of $G_\ell^+$ induced by the descendants of $H$ and $H'$, respectively, let $H_\ell^+ := H_\ell \cup H_\ell'$
, and let $\alpha_j\colon V(H_j) \rightarrow V(H_j')$ be the bijection defined in the construction of $C^+_k$. We need to show that the maximum red degree in $G^+_\ell$ is at most $2t$.

First, let $v \in V(G_\ell^+ - H_\ell^+)$. By construction of $C_k^+$, we know that $v \in V(G_j^+)$. Suppose that $v$ has higher red degree in $G^+_\ell$ than in $G^+_j$. This can happen only if a black edge $uv \in E(G_j^+)$ becomes red because of a contraction involving $u$. However, the only contractions happening in $C_k^+$ are between $u$ and $\alpha_j(u)$ for some $u \in V(H_j)$, and $uv \in E(G^+_j)$ if and only if $\alpha_j(u)v \in E(G^+_j)$, by definition of $\alpha_j$.
Hence, the red degree of $v$ in $G^+_\ell$ is at most its red degree in $G^+_j$, and that is at most $2t$ by Claim~\ref{cl:Cj}.

Second, we need to show that each $u \in V(H_\ell^+)$ has red degree at most $2t$ in $G_\ell^+$.
Observe that $H_\ell'$ contains no black edges because each vertex $u \in V(H_\ell')$ is a descendant of both $H'$ and $H''$. Hence, a vertex $u \in V(H_\ell') \setminus V(H_\ell)$ has red degree at most $t$ in $G^+_\ell$ because it cannot have higher red degree in $G^+_\ell$ than in $G^+_j$. Conversely, let $u \in V(H_\ell)$ and let $d$ be the degree of the ancestor $u_0 \in V(H_j)$ of $u$ in $H_j$. Observe that $u$ has degree at most $2d$ in $H_\ell^+$: for each neighbor $v_0 \in V(H_j)$ of $u_0$ in $H_j$, $u$ can have two neighbors in $H_\ell^+$, namely $v_0$ and $\alpha_j(v_0)$; this happens when $u_0$ has been contracted with $\alpha_j(u_0)$ into $u$ but no neighbor $v_0 \in V(H_j)$ of $u_0$ has been contracted with $\alpha_j(v_0)$. 
Moreover, $u_0$ and $\alpha_j(u_0)$ have exactly the same red neighbors in $S_j^H$ (by definition of $\alpha_j$).  
Hence, the red degree of $u$ in $G^+_\ell$ has increased by at most $d \le t$, compared to the red degree of $u_0$ in $G_j^+$, and so $u$ has at most $t + d \le 2t$ red neighbors, which concludes the proof.
\end{claimproof}

Since $H_j'$ contains no black edges (each of its vertices is a descendant of both $H'$ and $H''$), the contraction of $H_j$ and $H_j'$ creates no new red edge (using also the fact that $H_j$ and $H_j'$ are attached to $S_j^H$ in the same way). Hence, we obtain that 
$G_k^+ \cong G_j^+ - H_j \cong G_{i-1}'$, and we can indeed define the rest of $C^+$ to be the suffix of $C'$ starting with $G_i'$. This suffix has width at most $2t$, since $C'$ has progressive width $(t \rightarrow_{i} 2t)$.
\end{proof}

Now we are finally ready to prove Lemma~\ref{lem:CfromC'}. This is the only remaining part of this section because we have already shown how Lemma~\ref{lem:CfromC'} implies Theorem~\ref{thm:param-by-VI}, see Subsection~\ref{sub:5setup}.

\begin{proof}[Proof of Lemma~\ref{lem:CfromC'}] 
The idea of the proof is to iteratively apply Lemma~\ref{lem:one_new_H} to all the graphs in $\ca C$ not present in the reduced graph $G'$. However, this requires some care, as applying the lemma 
in the wrong order might fail to ensure the precondition on the progressive-width.
In order to prove this lemma, we will consider the following key claim:

\begin{claim}\label{cl:ind_update}

\begin{comment}
Given $G^*$, $C^*$, $\mathcal{L^*}$ such that:\\
(1) $G^*$ is $\con$-respecting;\\ 
(2) $\mathcal{L^*}$ is a list of pairs (trigraph $H$, integer $\delta$), such that the integer value is non-increasing;\\
(3) $\forall (H,\delta) \in \mathcal{L^*}:$ (i.) $H \in \con$, and $H$ appears only once in $\mathcal{L^*}$, 
(ii.) $H \nsubseteq G^*$, 
(iii.) $[H]_{\sim}$ is of size at least $f(p)$ in $G^*$,
(iv.) $G^*_{\delta}$ is $C^*$-safe for $H$;
\end{itemize}
(4) $C^*$ is a contraction sequence for $G^*$ of width at most $2t$, and for $\mathcal{L^*}[0]=(H_0, \delta_0)$, $C^*$ has progressive width $(t\rightarrow_{\delta_0+1} 2t)$;\\
(5) $V(G^*)\cup \bigcup_{(H,\delta)\in\mathcal{L^*}} V(H)=V(G)$\\
\end{comment}

Given $G^*$, $C^*$, and $\mathcal{L^*}$ satisfying the following properties, we can construct in polynomial time a contraction sequence $C$ of width at most $2t$ for $G$.
\begin{enumerate}
\item $G^*$ is a $\con$-respecting graph;
\item $\mathcal{L^*}$ is a list of pairs (graph $H$, integer $\delta$), such that the integer value is non-increasing;
\item Each pair $(H,\delta)$ in $\mathcal{L^*}$ satisfies all of the following: (i.) $H \in \con$, and $H$ appears only once in $\mathcal{L^*}$, 
(ii.) $H \nsubseteq G^*$, 
(iii.) $[H]_{\sim}$ is large 
in $G^*$,
(iv.) $G^*_{\delta}$ is $C^*$-safe for $H$;
\item $C^*$ is a contraction sequence for $G^*$ of width at most $2t$, and if $(H_0, \delta_0)$ is the first pair in $\ca L^*$, then $C^*$ has progressive width $(t\rightarrow_{\delta_0+1} 2t)$;
\item $V(G^*)\cup \bigcup_{(H,\delta)\in\mathcal{L^*}} V(H)=V(G)$.
\end{enumerate}
\end{claim}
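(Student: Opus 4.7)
The plan is to establish Claim~\ref{cl:ind_update} by induction on $|\ca L^*|$ and then bootstrap it with $G'$, $C'$ to conclude Lemma~\ref{lem:CfromC'}. The base case $\ca L^* = \emptyset$ is immediate: condition~(5) gives $V(G^*) = V(G)$, and since $G^*$ is a $\con$-respecting induced subgraph of $G$, this forces $G^* = G$, so $C := C^*$ already satisfies the conclusion.

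For the inductive step, I extract the first pair $(H_0, \delta_0) \in \ca L^*$ and invoke Lemma~\ref{lem:one_new_H} with $G' := G^*$, $C' := C^*$, $H := H_0$, and $i := \delta_0 + 1$. All preconditions hold by the hypotheses of the claim, so the lemma produces an enlarged $\con$-respecting graph $G^+ := G[V(G^*) \cup V(H_0)]$ together with a contraction sequence $C^+$ of progressive width $(t \rightarrow_{\delta_0+1} 2t)$. Its ``moreover'' clause guarantees that $G^*_j \uparrow G^+$ equals the $j$-th trigraph of $C^+$ for every $j \le \delta_0$, which drives the rest of the argument. Applying the induction hypothesis to $(G^+, C^+, \ca L^+)$ with $\ca L^+ := \ca L^* \setminus \{(H_0, \delta_0)\}$ then yields the desired $C$ for $G$.

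The main verification, and the part I expect to be the main obstacle, is that $(G^+, C^+, \ca L^+)$ again satisfies conditions~(1)--(5). Items~(1), (2), (3.i)--(3.iii), and (5) follow directly. For~(3.iv), the non-increasing ordering of $\ca L^*$ forces $\delta \le \delta_0$ for every remaining $(H, \delta) \in \ca L^+$, hence $G^+_\delta = G^*_\delta \uparrow G^+$, so two merged representatives of $[H]_\sim$ in $G^*_\delta$ remain merged in $G^+_\delta$; moreover, the inserted segment from Lemma~\ref{lem:one_new_H} contracts only descendants of $V(H_0) \cup V(H_0')$ and touches no descendant of $S$, which ensures that the $C^+$-critical index for $H$ equals the $C^*$-critical index up to a rightward shift by the insertion length, and therefore still strictly exceeds $\delta$. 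For~(4), if $(H_1, \delta_1)$ is the new first pair, then $\delta_1 \le \delta_0$, the prefix of $C^+$ up to position $\delta_1$ coincides via $\uparrow G^+$ with the width-$\le t$ prefix of $C^*$, and all of $C^+$ has width $\le 2t$.

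To conclude Lemma~\ref{lem:CfromC'}, I would initialize with $G^* := G'$, $C^* := C'$ (whose width $t$ is trivially $\le 2t$), and $\ca L^*$ containing one pair $(H, \delta_H)$ for each $H \in \con$ with $H \nsubseteq G'$: by Definition~\ref{def:G'} the class $[H]_\sim$ is large in $G'$, and Lemma~\ref{lem:safe-exists} provides $\delta_H := i_H - 1$ where $G'_{i_H}$ is the $C'$-critical trigraph for $H$; the list is then sorted so that $\delta$ values are non-increasing. The initial progressive-width condition is trivial since $w(C') \le t$ everywhere. The deeper reason the non-increasing ordering is essential is that it ensures every inserted segment lands at or after the safe indices of all later pairs, preventing the safety witnesses of the remaining entries from being destroyed by the rewriting of the contraction sequence.
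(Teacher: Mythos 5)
Your proposal is correct and follows essentially the same route as the paper: induction on $|\ca L^*|$, invocation of Lemma~\ref{lem:one_new_H} on the first pair, and verification that $(G^+, C^+, \ca L^+)$ inherits all five conditions. Your base-case reasoning, the identification of $i = \delta_0 + 1$, and the use of the ``moreover'' clause to preserve the prefix are exactly what the paper does. One minor note: for item~(3.iv) the paper argues that $G^+_\delta$ contains no red edge not already present in $G^*_\delta$ (because any new one would be incident to $H_0$ and contradict the $C^*$-safety of $G^*_{\delta_0}$ for $H_0$), rather than reasoning globally about a ``rightward shift'' of the critical index. Your formulation reaches the same conclusion, but the simplest way to close the argument is to observe that for all $\ell \le \delta \le \delta_0$ one has $G^+_\ell \uparrow G = G^*_\ell \uparrow G$, which alone already forces the $C^+$-critical index for $H$ to exceed $\delta$ regardless of where exactly it lands relative to the inserted segment; the shift bookkeeping is sound but unnecessary.
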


\begin{claimproof}[Proof of Claim~\ref{cl:ind_update}]
We proceed by induction on the length of $\mathcal{L^*}$.
The base case is trivial: if $\mathcal{L^*}$ is the empty list, the conditions 1. and 5. ensure that $G^*=G$, and 4. ensures that $C^*$ has width $2t$.

Now let us suppose that the claim is true for any list of length $i$, for some $i\geq 0$. Consider $G^*$, $C^*$, $\mathcal{L^*}$ satisfying the hypothesis such that $\mathcal{L^*}$ contains $i+1$ elements, the first of which being $(H_0, \delta_0)$. We can apply Lemma~\ref{lem:one_new_H} to $G^*$, $C^*$ and $H_0$ since the points 1., 3. and 4. are exactly the preconditions of the lemma, and we obtain in polynomial time a contraction sequence $C^+$ of progressive width $(t\rightarrow_{\delta_0+1} 2t)$ for $G^+=G[V(G^*)\cup V(H_0)]$.

Now let us consider $\mathcal{L^+}$ the suffix of $\mathcal{L^*}$ of length $i$---i.e., we only remove $(H_0, \delta_0)$---and prove that $G^+$, $C^+$, and $\mathcal{L^+}$ satisfy the requirements to apply the induction hypothesis.

The first obvious point is that the length of $\mathcal{L^+}$ is $i$. Since $G^*$ is $\con$-respecting and $H\in \con$, we obtain that $G^+$ is $\con$-respecting, i.e., it satisfies 1. We can easily verify 5.: \[V(G^+)\cup \bigcup_{(H,\delta)\in\mathcal{L^+}} V(H)=V(G^*)\cup V(H_0) \bigcup_{(H,\delta)\in\mathcal{L^+}} V(H)=V(G^*)\cup \bigcup_{(H,\delta)\in\mathcal{L^*}} V(H) =
V(G).\]

As a suffix of $\mathcal{L^*}$, $\mathcal{L^+}$ satisfies 2., and the first three requirements in 3. are also trivially satisfied. To prove the 3.iv., it is necessary to observe two things. First, observe that for each pair $(H, \delta) \in \mathcal{L^+}$, it holds that $G^+_{\delta}=G^*_{\delta}\uparrow G^+$ since $\delta \leq \delta_0$, by Lemma~\ref{lem:one_new_H} (the ``moreover'' part).
Second, observe that there is no red edge in $G^+_\delta$ that is not already present in $G^*_{\delta}$: indeed, any such red edge would be incident to $H_0$ by construction of $G^+_\delta$, and its existence would contradict the definition of $\delta_0$, i.e., the $C^*$-safeness for $H_0$ of $G^*_{\delta_0}$. Hence we conclude that for every $(H, \delta)\in \mathcal{L^+}$, it holds that $G^+_{\delta}$ is $C^+$-safe for $H$.

The last item to check, requirement 4., is easily handled: we know that $C^+$ has progressive width $(t\rightarrow_{\delta_0+1} 2t)$ by Lemma~\ref{lem:one_new_H}, and for all $(H,\delta)\in \mathcal{L^+}$, it holds that $\delta\leq \delta_0$ by 2., i.e., by the monotony of $\mathcal{L^*}$ in the second component.

Using the induction hypothesis, we can now create in polynomial time a contraction sequence $C$ of width at most $2t$ for $G$. The total running time is polynomial, hence the claim is proven.
\end{claimproof}

To finish the proof of Lemma~\ref{lem:CfromC'}, we only need to construct the initial list $\ca L'$ for $G'$. For each graph $H\in \con$ such that $H\nsubseteq G'$, let $\delta(H)$ be the index of the last $C'$-safe trigraph for $H$, whose existence is ensured by Lemma~\ref{lem:safe-exists}.
Let $\mathcal{L'}$ be the list of pairs $(H, \delta(H))$, ordered by non-increasing values of $\delta(H)$, and recall that $C'$ is given.
It is easy to see that the requirements on $G'$, $C'$ and $\mathcal{L'}$ are satisfied---either by definition or by construction---to apply Claim~\ref{cl:ind_update}: we obtain in polynomial time a contraction sequence $C$ of width at most $2t$ for $G$, and since the creation of $\mathcal{L'}$ can be achieved in polynomial time, we have proven the lemma.
\end{proof}

\section{Concluding Remarks}
While we believe that the results presented here provide an important contribution to the state of the art in the area of computing twin-width, many prominent questions still remain unanswered. Apart from the ``grand prize''---resolving the parameterized approximability of twin-width when the runtime parameter is twin-width itself---future research may focus on finding fixed-parameter algorithms that compute optimal or near-optimal contraction sequences under less restrictive runtime parameters than those considered in this article. 

More specifically, the problem remains entirely open when parameterized by treewidth and treedepth, and resolving this may require new insights into the structural properties of optimal contraction sequences and lead to tighter bounds on the twin-width of well-structured graphs. For treedepth in particular, we suspect that combining the ideas presented in Section~\ref{sec:vi} with the \emph{iterative pruning} approach typically used for treedepth-based algorithms~\cite{GanianO18,GanianPSS20,BhoreGMN22} may be an enticing direction to pursue; however, we note that such a combination does not seem straightforward.

\bibliographystyle{plainurl}
\bibliography{main.bib}
\end{document}